\newtheorem{theorem}{Theorem}
\renewcommand{\vec}[1]{\mathbf{#1}}
\renewcommand{\u}{\mathbf{u}}
\renewcommand{\v}{\mathbf{w}}
\newcommand{\game}[1]{\texttt{#1}}
\newcommand{\eqpt}{EqPt}
\newcommand{\ang}[2]{{\cos}{\angle}({#1},{#2})}
\newcommand{\inprod}[2]{{\langle}{#1},{#2}{\rangle}}
\DeclarePairedDelimiter{\abs}{\big\lvert}{\big\rvert}
\DeclarePairedDelimiter{\norm}{\lVert}{\rVert}
\renewcommand{\lambda}{R}
\begin{document}

\preprint{AIP/123-QED}

\title{Geometrical Regret Matching}

\author{Sizhong Lan}
\affiliation{$^1$China Mobile Research Institute, Beijing 100053, China}
\email{lsz@nzqrc.cn}

\date{\today}

\begin{abstract}
    We argue that the existing regret matchings for Nash equilibrium approximation conduct ``jumpy'' strategy updating when the probabilities of future plays are set to be proportional to positive regret measures.
    We propose a geometrical regret matching which features ``smooth'' strategy updating.
    Our approach is simple, intuitive and natural.
    The analytical and numerical results show that, continuously and ``smoothly'' suppressing ``unprofitable'' pure strategies is sufficient for the game to evolve towards Nash equilibrium, suggesting that in reality the tendency for equilibrium could be pervasive and irresistible.
    Technically, iterative regret matching gives rise to a sequence of adjusted mixed strategies for our study its approximation to the true equilibrium point.
    The sequence can be studied in metric space and visualized nicely as a clear path towards an equilibrium point.
    Our theory has limitations in optimizing the approximation accuracy.
\end{abstract}

\maketitle

\section{Introduction}
In 2000 Hart and Mas-Colell proposed an iterative algorithm called regret matching to approximate a correlated equilibrium~\cite{Hart2000,Hart2001}.
The players keep tracking the regrets on the past plays and making the future plays with probabilities proportional to positive regret measures.
This algorithm is particularly natural in that the players don't have to ply about their opponents' payoff functions, as opposed to the non-adaptive variety, e.g. the celebrated Lemke-Howson algorithm~\cite{Lemke1964,Lemke1965} which takes two players' payoff matrices as input and pinpoints equilibrium points as output.
In other words, regret matching allows the players to reach equilibrium without strict mediation from beyond themselves.
The concept of regret matching has since been adopted and developed by other algorithms~\cite{cfr2008}.

Iterative regret matching can be seen as continuing updating of mixed strategy with regret information:
the mixed strategy to update is statistical structure of the whole past plays, and the mixed strategy updated will determine the probabilities of plays in the immediate future.
Regret matching is essentially a function between them.
In general the existing regret-matching functions update the mixed strategy proportional to positive regret measures, meaning that each matching is a ``strategy jump'' and the past mixed strategy has little relevance except for it being used for regret evaluation.
In this paper, we try to propose a ``smoother'' regret matching with more respect to the past plays.
Partly inspired by the mapping $T$ in Nash's existential proof of equilibrium point~\cite{Nash1951}, our function for regret matching takes the form of

\begin{equation*}
    \vec{s}'_i=\frac{\vec{s}_i+r_i\lambda_{i}(\vec{s}_i)}{1+r_i\abs{\lambda_{i}(\vec{s}_i)}}\text{, where }r_i>0.
\end{equation*}

As shown in FIG~\ref{fig:angle_close_up} our regret-matching function can be interpreted geometrically if we treat mixed strategies and regrets on pure strategies both as vectors:
each regret matching will ``push'' mixed strategy vector towards regret vector by a small angle, and the smaller $r_i$ is the ``smoother'' matching will be.
Our matching leads to clear analytical results for regret minimization.
The test results in FIG~\ref{fig:3X3-1eq2sp},~\ref{fig:60X40-conv} and~\ref{fig:nP-path} show that, our matching delivers strong numerical convergence to Nash equilibrium both for two-person and many-person non-cooperative games.
And as shown in FIG~\ref{fig:3X3-1eq2sp-attractor}, the convergence is independent of the initial mixed strategies to start game with.

\begin{figure}[h]
\centering
\includegraphics[width=\linewidth]{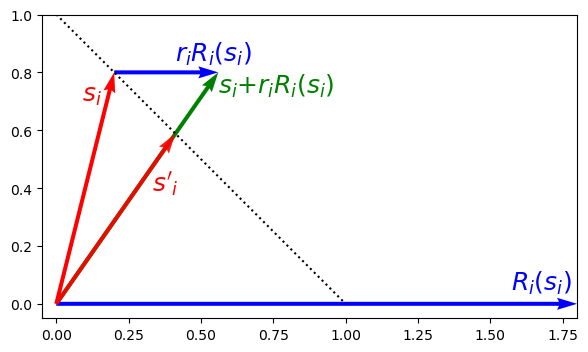}
\caption{\label{fig:angle_close_up}
    An illustrative example for the geometrical relation of mixed strategy $\vec{s}_i$, regret vector $\lambda_i(\vec{s}_i)$ and regret-matched $\vec{s}'_i$.
    The dotted line represents the simplex in $\mathbb{R}^2$.
    In this example regret vector $\lambda_i(\vec{s}_i)$ must be to the direction of one vertex since the other one is ``less profitable than average'' such that the corresponding regret component in vector $\lambda_i(\vec{s}_i)$ is zero.
}
\end{figure}

\begin{figure}[h]
\centering
\includegraphics[width=\linewidth]{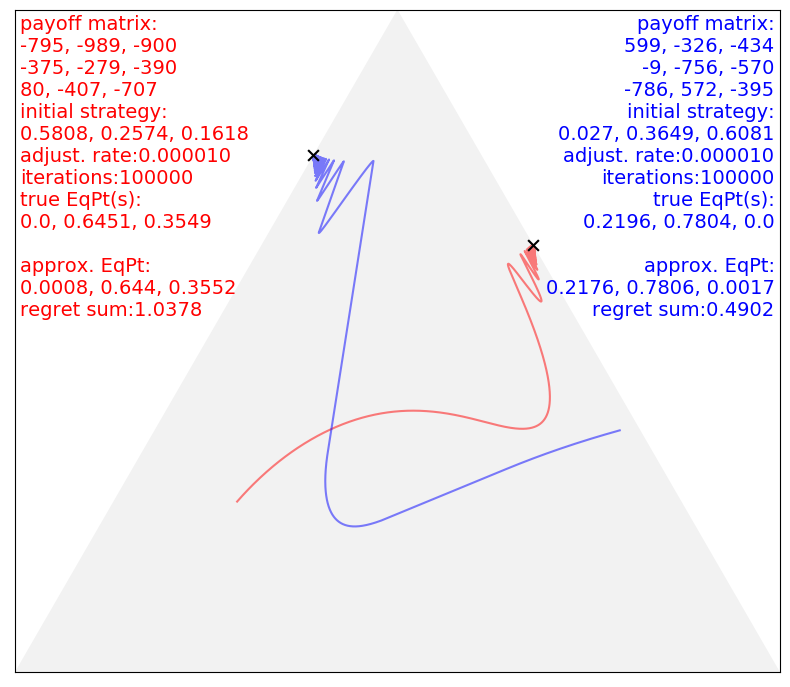}
\caption{\label{fig:3X3-1eq2sp}
The pair of ``strategy paths'' on the simplex of $\mathbb{R}^3$ for a two-person game.
The game has the code name \game{3X3-1eq2sp} to refer to, indicating that the game has one single equilibrium point and either player can use three pure strategies.
}
\end{figure}

\begin{figure}[h]
\centering
\includegraphics[width=\linewidth]{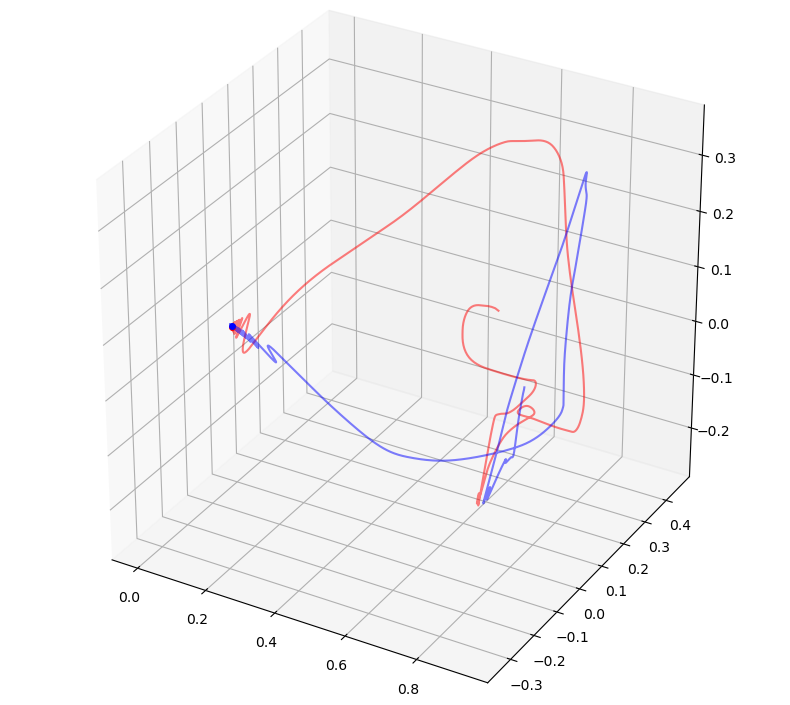}
\caption{\label{fig:60X40-conv}
The pair of ``strategy paths'' for a two-person game where two players use $60$ and $40$ pure strategies respectively.
The $60$ or $40$ dimensions of ``strategy path'' data are reduced with PCA to three dimensions for the visualization in $\mathbb{R}^3$ .
}
\end{figure}

\begin{figure}[h]
\centering
\includegraphics[width=\linewidth]{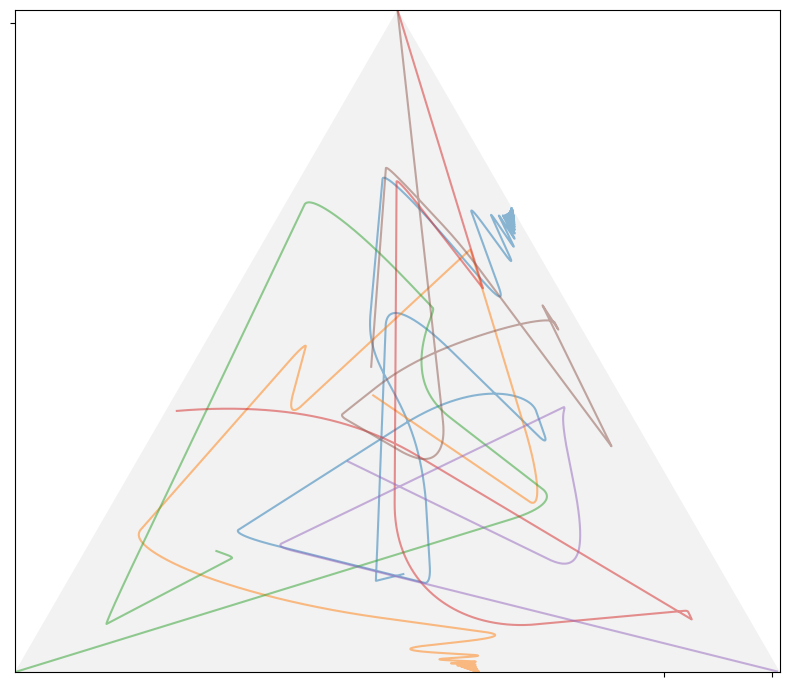}
\caption{\label{fig:nP-path}
    The ``strategy paths'' for a six-person game, where each player can use three pure strategies.
}
\end{figure}

\begin{figure}[h]
\centering
\includegraphics[width=\linewidth]{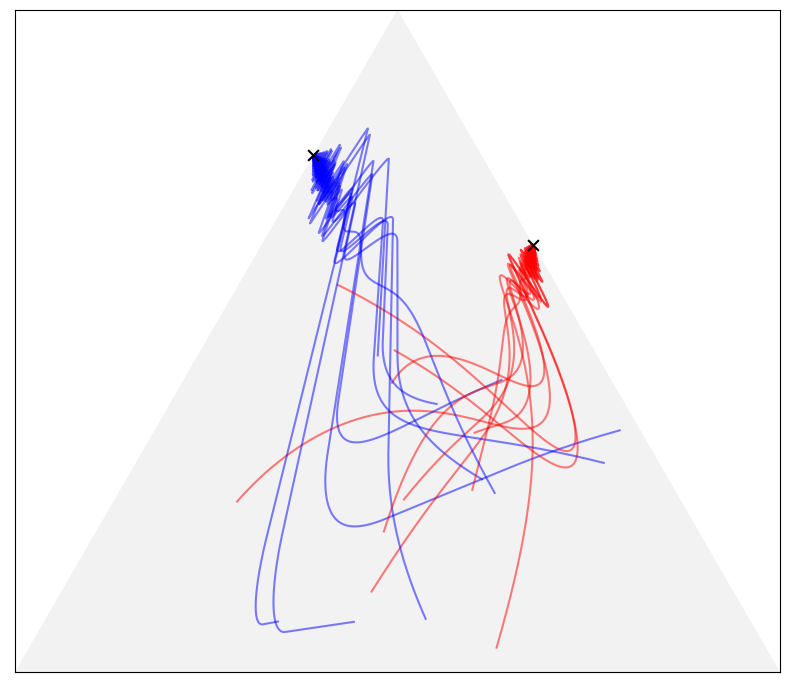}
\caption{\label{fig:3X3-1eq2sp-attractor}
The equilibrium point of game \game{3X3-1eq2sp} is an ``attractor'' for all possible ``strategy paths''.  }
\end{figure}

% We notice the connection that the classic matching proportional to positive regret could be an extreme case of ours, because $\vec{s}'_i$ tends to $\lambda_{i}(\vec{s}_i){/}\abs{\lambda_{i}(\vec{s}_i)}$ as $r_i$ tends to positive infinity.
To the other extreme, we notice that $\vec{s}'_i$ tends to $\lambda_{i}(\vec{s}_i){/}\abs{\lambda_{i}(\vec{s}_i)}$ as $r_i$ tends to positive infinity, reminiscent of the classic matching proportional to positive regret.
In fact, as an important result, our matching will reveal the true rationality underlying the somewhat heuristic "proportional to positive regret": to immediately suppress ``less profitable than average'' pure strategies.
This result agrees well with the observed behavioral tendency, which, together with the smoothness of strategy updating, makes our matching a natural one.

The remainder of paper goes as follows.
In Section~\ref{section:game}, we vectorize the key concepts of non-cooperative game as well as regret measures in preparation of Section~\ref{section:fpi_eqpt}, where our regret matching is elaborated and applied directly upon mixed strategies so as to approximate a Nash equilibrium point.
In Section~\ref{section:2p_game}, the approximation algorithm is verified with the examples both of two-person game and general $n$-person game, and their ``strategy paths" are visualized as shown in FIG~\ref{fig:3X3-1eq2sp}{--}\ref{fig:3X3-1eq2sp-attractor}.
In Section~\ref{section:accuracy}, we address the accuracy issues of equilibrium approximation.
In Section~\ref{section:lambda_variants}, we consider some generalizations for our theory.

\section{Non-cooperative game and regret\label{section:game}}

In this section, we will define the condition of Nash equilibrium as well as the regret measures in the form of vector and its $L1$-norm.

First of all, let us define the familiar concepts of non-cooperative game and set up their notations.

\begin{itemize}
\item There are $n$ players, and each player $i$ has $g_i$ pure strategies. The $i$\textsuperscript{th} player's $j$\textsuperscript{th} pure strategy is denoted by $\pi_{ij}$.

\item The mixed strategy $\vec{s}_i$ of each player $i$ is a point of a probability simplex $\mathbb{I}_i$, which is a compact and convex subset of the real vector space $\mathbb{R}^{g_i}$ of $g_i$ dimensions. That is, $\vec{s}_i{\in}\mathbb{I}_i{\subset}\mathbb{R}^{g_i}$.

\item Each player $i$ has a payoff function $\bar{p}_i:\mathbb{S}{\rightarrow}\mathbb{R}$. Any $\vec{S}{\in}\mathbb{S}$ is an $n$-tuple of mixed strategies, e.g. $(\vec{s}_1,\ldots,\vec{s}_i,\ldots,\vec{s}_n)$ in which each item $\vec{s}_i$ is associated with the $i$\textsuperscript{th} player such that $\vec{s}_i{\in}\mathbb{I}_i$. Therefore, $\mathbb{S}$ is the product of all $n$ players' simplices, i.e. $\mathbb{I}_1{\times}\mathbb{I}_2{\times}{\cdots}{\times}\mathbb{I}_n$.

\item Any player $i$ can, from the given $\vec{S}{\in}\mathbb{S}$, form a new $n$-tuple of mixed strategies, denoted by $\vec{s}_i;\vec{S}{\in}\mathbb{S}$, by unilaterally substituting the $i$\textsuperscript{th} item of $\vec{S}$ with its new mixed strategy $\vec{s}_i$, or as a special case, $\pi_{ij};\vec{S}{\in}\mathbb{S}$ by unilaterally substituting with its $j$\textsuperscript{th} pure strategy. Obviously, $\vec{s}_i;\vec{S}{=}\vec{S}$ if $\vec{S}$ already uses $\vec{s}_i$.

\end{itemize}

Intendedly, these vector-centric definitions and notations allow us to, for any player $i{=}1,2,{\ldots},n$, translate the linearity of its payoff $\bar{p}_i(\vec{S})$ in $\vec{s}_i$ into $\bar{p}_i(\vec{S}){=}\inprod{\vec{s}_i}{\vec{v}_i}$.
Here $\inprod{\cdot}{\cdot}$ is the inner product of vectors, $\vec{s}_i$ is the mixed strategy player $i$ uses in $\vec{S}$ and $\vec{v}_i$ is the player's payoff vector for the $g_i$ vertices of its simplex $\mathbb{I}_i$. Given any $n$-tuple $\vec{S}{\in}\mathbb{S}$, for each player $i$ we have

\begin{equation}
\vec{v}_{i,\vec{S}}=\Big(\bar{p}_i(\pi_{i1};\vec{S}),\ldots,\bar{p}_i(\pi_{ij};\vec{S}),\ldots,\bar{p}_i(\pi_{ig_i};\vec{S})\Big).
\label{eq:vertex_payoff}
\end{equation}

Given any $\vec{S}$, each player can substitute its item in $\vec{S}$ with $\vec{s}_i{\in}\mathbb{I}_i$ to form a new one $\vec{s}_i;\vec{S}{\in}\mathbb{S}$, which means that we can designate for each player $i$ a variant payoff function $p_{i,\vec{S}}$ of substitute strategy $\vec{s}_i$:

\begin{equation}
p_{i,\vec{S}}(\vec{s}_i)=\bar{p}_i(\vec{s}_i;\vec{S}){=}\inprod{\vec{s}_i}{\vec{v}_{i,\vec{S}}}.
\label{eq:payoff_inprod}
\end{equation}

Note that $p_{i,\vec{S}}(\vec{s}_i)$ is the expected value of payoff or simply the average payoff, since $\vec{s}_i$ can be conveniently treated as probability distribution.
From Eq. (\ref{eq:vertex_payoff}) and (\ref{eq:payoff_inprod}) we can derive for each player $i$ a function

\begin{equation}
\lambda_{i,\vec{S}}(\vec{s}_i)=\Big(\varphi_{i1,\vec{S}}(\vec{s}_i),\ldots,\varphi_{ij,\vec{S}}(\vec{s}_i),\ldots,\varphi_{ig_i,\vec{S}}(\vec{s}_i)\Big).
\label{eq:lambda}
\end{equation}

Here each component of vector $\lambda_{i,\vec{S}}(\vec{s}_i)$ is defined by the very function proposed by Nash in his existential proof:

\begin{equation}
\varphi_{ij,\vec{S}}(\vec{s}_i)={\max}\Big\{0,\bar{p}_i(\pi_{ij};\vec{S}){-}p_{i,\vec{S}}(\vec{s}_i)\Big\}.
\label{eq:varphi}
\end{equation}

Each component $\varphi_{ij,\vec{S}}(\vec{s}_i)$ represents the payoff gain for player $i$ when its strategy moves from mixed $\vec{s}_i$ to its $j$\textsuperscript{th} pure strategy.
We say a pure strategy is ``more profitable'' than $\vec{s}_i$ if its $\varphi_{ij,\vec{S}}(\vec{s}_i){>}0$ or otherwise ``less (or equally) profitable''.
Vector $\lambda_{i,\vec{S}}(\vec{s}_i)$ has at least one zero component, since there must be one ``least profitable'' pure strategy as implied by the linearity of $p_{i,\vec{S}}(\vec{s}_i)$ in $\vec{s}_i$.

Given $\vec{S}$, vector $\lambda_{i,\vec{S}}(\vec{s}_i)$ measures, in terms of payoff, how far a substitute $\vec{s}_i$ is from the optimal one.
Specifically, the $L1$-norm of $\lambda_{i,\vec{S}}(\vec{s}_i)$, i.e. $\abs{\lambda_{i,\vec{S}}(\vec{s}_i)}{=}\sum_j^{g_i}\varphi_{ij,\vec{S}}(\vec{s}_i)$, decreases as $p_{i,\vec{S}}(\vec{s}_i)$ increases and would turn zero when $p_{i,\vec{S}}(\vec{s}_i)$ is maximized with the optimal substitute.
Recall that an equilibrium point is an $n$-tuple $\vec{S}$ in which each player's mixed strategy is optimal against those of its opponents\cite{Nash1951}.
That is, an $n$-tuple $\vec{S}^+{\in}\mathbb{S}$ is an equilibrium point, if and only if $\lambda_{i,\vec{S}^+}(\vec{s}^+_i){=}\vec{0}$ or equivalently $\abs{\lambda_{i,\vec{S}^+}(\vec{s}^+_i)}{=}0$ for each item $\vec{s}^+_i$ in $\vec{S}^+$.

Vector $\lambda_{i,\vec{S}}(\vec{s}_i)$ will be the focal point of our study.
We call $\lambda_{i,\vec{S}}(\vec{s}_i)$ the \emph{regret vector} of player $i$ in that the vector indicates how much payoff it would have gained if its strategy moved to the vertices of its simplex $\mathbb{I}_i$, and correspondingly $\abs{\lambda_{i,\vec{S}}(\vec{s}_i)}$ the \emph{regret sum} of player $i$.
We shall occasionally abbreviate equilibrium point by \eqpt{}, and write $p_i(\vec{s}_i)$, $\varphi_{ij}(\vec{s}_i)$ and $\lambda_i(\vec{s}_i)$ as implicitly indexed by given $\vec{S}$.

\section{Regret Matching and Fixed point iteration\label{section:fpi_eqpt}}

In this section, we will conduct regret matching, utilizing regret vector, iteratively on the initial mixed strategies to approximate a Nash equilibrium.
Ideally, this iterative regret matching could develop into a fixed point iteration~\cite{fpi}.

We define each player's regret matching to be a function $\lambda_{i,\vec{S}}:\mathbb{I}_i{\rightarrow}\mathbb{I}_i$:

%Here each $\vec{s}_i$ is used in the given $\vec{S}$, and each function $\psi_{i,\vec{S}}$ on $\vec{s}_i$ is defined to be

\begin{equation}
\psi_{i,\vec{S}}(\vec{s}_i)=\frac{\vec{s}_i+r_i\lambda_{i,\vec{S}}(\vec{s}_i)}{1+r_i\abs{\lambda_{i,\vec{S}}(\vec{s}_i)}}.\label{eq:psi}
\end{equation}

Here real number $r_i{>}0$.
By $\vec{s}'_i{=}\psi_{i,\vec{S}}(\vec{s}_i)$ we translate that, on the game with given $n$-tuple $\vec{S}$, if the player $i$ unilaterally changes its mixed strategy to any other $\vec{s}_i$, it will regret that $\vec{s}'_i$ should have been used instead.
We write $\psi_i(\vec{s}_i)$ occasionally.
Our design of $\psi_i$ for regret matching has three important aspects to consider.

\textbf{First, function $\psi_i$ has a geometrical interpretation.}

%Let $\vec{s}'_i{=}\psi_i(\vec{s}_i)$ and depict the relation of vectors $\vec{s}_i$, $\lambda_i(\vec{s}_i)$ and $\vec{s}'_i$ as in FIG~\ref{fig:angle_close_up}.

We can see in FIG~\ref{fig:angle_close_up} that vector $\vec{s}'_i$ is ``closer'' to vector $\lambda_i(\vec{s}_i)$ than vector $\vec{s}_i$ is.
Generally, the ``closeness'' of two vectors, say $\u$ and $\v$, can be measured by their angle $\angle(\u,\v)$, which further can be measured by $\ang{\u}{\v}{=}\frac{\inprod{\u}{\v}}{\norm{\u}\norm{\v}}$.
By Theorem~\ref{theo:vectorangle}, we have $\ang{\vec{s}_i{+}r_i\lambda_i(\vec{s}_i)}{\lambda_i(\vec{s}_i)}{\ge}\ang{\vec{s}_i}{\lambda_i(\vec{s}_i)}$ and thus

\begin{equation}
\ang{\vec{s}'_i}{\lambda_i(\vec{s}_i)}\ge{}\ang{\vec{s}_i}{\lambda_i(\vec{s}_i)}.
\end{equation}

\begin{theorem}
For any non-zero real vectors $\u$, $\v$ and $\u{+}r\v$ where real $r{>}0$, $\ang{\u{+}r\v}{\v}{\ge}\ang{\u}{\v}$.\label{theo:vectorangle}
\end{theorem}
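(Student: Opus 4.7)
The plan is to turn the angle inequality into an algebraic statement through $\ang{\mathbf{a}}{\mathbf{b}}=\inprod{\mathbf{a}}{\mathbf{b}}/(\norm{\mathbf{a}}\norm{\mathbf{b}})$ and to reduce everything to the Cauchy--Schwarz inequality via a one-variable monotonicity argument. Concretely, I would introduce the auxiliary function
\[
f(t)=\frac{\inprod{\u}{\v}+t\norm{\v}^2}{\norm{\u+t\v}\,\norm{\v}},\qquad t\in[0,r],
\]
so that $f(0)=\ang{\u}{\v}$ and $f(r)=\ang{\u+r\v}{\v}$. Showing $f$ is non-decreasing on $[0,r]$ is then exactly what is required.

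The key computation is a direct quotient-rule differentiation, using the identity $\tfrac{d}{dt}\norm{\u+t\v}^2=2\inprod{\u}{\v}+2t\norm{\v}^2$. After the expected cancellation of the terms linear and quadratic in $t$, the derivative simplifies cleanly to
\[
f'(t)=\frac{\norm{\u}^2\norm{\v}^2-\inprod{\u}{\v}^2}{\norm{\u+t\v}^{3}\norm{\v}},
\]
whose numerator is non-negative by Cauchy--Schwarz and whose denominator is positive. Integrating $f'\ge 0$ from $0$ to $r$ yields $f(r)\ge f(0)$, which is the claim. A purely algebraic alternative, which I would mention as a backup, is to cross-multiply by $\norm{\u+r\v}\,\norm{\u}$ and square, splitting on the sign of $\inprod{\u}{\v}$; the resulting polynomial inequality collapses in one line to
\[
r\bigl(2\inprod{\u}{\v}+r\norm{\v}^2\bigr)\bigl(\norm{\u}^2\norm{\v}^2-\inprod{\u}{\v}^2\bigr)\ge 0,
\]
which is again immediate from Cauchy--Schwarz once the sign of the middle factor is checked in each case.

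The only genuine obstacle is the degenerate case in which $\u+t\v=\vec{0}$ for some interior $t\in(0,r)$: there $f$ is singular, so monotonicity cannot be applied uniformly on $[0,r]$. This can happen only when $\u$ and $\v$ are anti-parallel, say $\u=-t_0\v$ with $t_0>0$. I would dispose of this case at the outset by direct inspection: the hypothesis $\u+r\v\ne\vec{0}$ forces $r\ne t_0$, so $\u+r\v$ is a nonzero scalar multiple of $\v$ and $\ang{\u+r\v}{\v}\in\{+1,-1\}$, while $\ang{\u}{\v}=-1$, making the inequality trivial. With the anti-parallel case set aside, $\u+t\v$ never vanishes on $[0,r]$ and the calculus argument applies without modification.
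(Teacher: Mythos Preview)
Your proof is correct and takes a genuinely different route from the paper's.

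The paper proceeds by a three-way case split on the sign of $\inprod{\u+r\v}{\v}$. In the positive case it invokes the triangle inequality $\norm{\u+r\v}\le\norm{\u}+r\norm{\v}$ and then rearranges the resulting fraction algebraically; the zero case is immediate; and the negative case is reduced to the positive one by the substitution $\u'=\u+r\v$, $\v'=-\v$. Your argument instead treats $f(t)=\ang{\u+t\v}{\v}$ as a function of the real parameter $t$, differentiates, and discovers that the numerator of $f'(t)$ collapses to $\norm{\u}^2\norm{\v}^2-\inprod{\u}{\v}^2$, so that non-negativity is exactly Cauchy--Schwarz. Your separate handling of the anti-parallel degenerate case is clean and necessary.

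What each approach buys: the paper's proof is calculus-free and needs only the triangle inequality, at the cost of a sign-based case analysis and a slightly ad hoc substitution for the negative case. Your approach is more uniform and more informative, since it shows that $t\mapsto\ang{\u+t\v}{\v}$ is monotone on its entire domain (not just between $0$ and $r$) and isolates Cauchy--Schwarz as the single inequality driving the result. The algebraic backup you sketch, factoring the cross-multiplied difference as $r\bigl(2\inprod{\u}{\v}+r\norm{\v}^2\bigr)\bigl(\norm{\u}^2\norm{\v}^2-\inprod{\u}{\v}^2\bigr)$, is a nice bridge between the two styles: it is calculus-free like the paper's argument but still exposes the Cauchy--Schwarz core.
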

\begin{proof}
    By definition, we have $\ang{\u{+}r\v}{\v}{=}\frac{\inprod{\u{+}r\v}{\v}}{\norm{\u{+}r\v}\norm{\v}}$ and $\ang{\u}{\v}{=}\frac{\inprod{\u}{\v}}{\norm{\u}\norm{\v}}$ to compare.
    There are three cases to consider with respect to $\inprod{\u{+}r\v}{\v}$, as follows:\\
    (\romannumeral 1) If $\inprod{\u{+}r\v}{\v}{>}0$, by the triangle inequality we have
    \begin{align}
    \nonumber
    &\ang{\u+r\v}{\v}=\frac{\inprod{\u+r\v}{\v}}{\norm{\u+r\v}\norm{\v}}\ge\frac{\inprod{\u+r\v}{\v}}{({\norm{\u}+\norm{r\v}})\norm{\v}} \\
    \nonumber
    &=\frac{\inprod{\u}{\v}}{\norm{\u}\norm{\v}}+\frac{r\norm{\v}}{\norm{\u}+r\norm{\v}}\Big(1-\frac{\inprod{\u}{\v}}{\norm{\u}\norm{\v}}\Big) \\
    &\ge\frac{\inprod{\u}{\v}}{\norm{\u}\norm{\v}}=\ang{\u}{\v}.
    \end{align}
    (\romannumeral 2) If $\inprod{\u{+}r\v}{\v}{=}0$, $\inprod{\u{+}r\v}{\v}{=}\inprod{\u}{\v}{+}r\norm{\v}^2{=}0$ such that $\ang{\u}{\v}{<}0{=}\ang{\u{+}r\v}{\v}$. \\
    (\romannumeral 3) If $\inprod{\u{+}r\v}{\v}{<}0$, $\inprod{\u}{\v}{<}{-}r\norm{\v}^2{<}0$.
    Let $\u'{=}\u{+}r\v$ and $\v'{=}{-}\v$.
    Because $\inprod{\u'+r\v'}{\v'}{=}{-}\inprod{\u}{\v}{>}0$, as with case (\romannumeral 1) we have $\ang{\u'{+}r\v'}{\v'}{>}\ang{\u'}{\v'}$ and thus $\ang{\u}{{-}\v}{>}\ang{\u{+}r\v}{{-}\v}$, which easily gives $\ang{\u}{\v}{<}\ang{\u{+}r\v}{\v}$.
\end{proof}

\textbf{Second, function $\psi_i$ has a behavioral interpretation.}

By comparing vectors $\vec{s}_i$ and $\vec{s}'_i$ component-wise, there must be less proportion of pure strategy $\pi_{ij}$ used in $\vec{s}_i$ than in $\vec{s}'_i$, if its corresponding component $\varphi_{ij}(\vec{s}_i)$ in vector $\lambda_i(\vec{s}_i)$ is zero.
In other words, $\psi_i$ preferably suppresses the use of ``less profitable than average'' pure strategies and meantime enhances the use of some of the ``more profitable than average'' ones.
Note that this behavioral aspect was mentioned in Nash's existential proof.
% Besides, we have the following theorems regarding $\vec{s}_i$ and $\vec{s}'_i$:

\textbf{Third, $\vec{s}'_i$ gives more payoff and less regret than $\vec{s}_i$.}

% Theorem~\ref{theo:payoff_increase} states that, if a player unilaterally adjusts its strategy by Eq. (\ref{eq:psi}) its payoff always increases unless its strategy is already optimal.
% And Theorem~\ref{theo:vgs_decrease} states that such unilateral adjustment actually reduces the player's regret sum so as to bring its mixed strategy ``closer'' to the optimal one.
Theorem~\ref{theo:payoff_increase} and Theorem~\ref{theo:vgs_decrease} show that $\vec{s}'_i$ will always be a better choice in terms of payoff and regret sum, no matter what $\vec{s}_i$ the player unilaterally changes its mixed strategy to, unless $\vec{s}_i$ itself is optimal.
% This aspect is crucial for our purpose.

\begin{theorem}
For any $\vec{S}{\in}\mathbb{S}$ and $\vec{s}_i{\in}\mathbb{I}_i$, $p_{i,\vec{S}}(\vec{s}'_i){\ge}p_{i,\vec{S}}(\vec{s}_i)$ where $\vec{s}'_i{=}\psi_{i,\vec{S}}(\vec{s}_i)$.\label{theo:payoff_increase}
\end{theorem}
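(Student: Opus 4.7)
\textbf{Plan for Theorem~\ref{theo:payoff_increase}.} The plan is to exploit the linearity of $p_{i,\vec{S}}$ in its first argument and compute $p_{i,\vec{S}}(\vec{s}'_i)$ directly from the defining formula $\vec{s}'_i=(\vec{s}_i+r_i\lambda_{i,\vec{S}}(\vec{s}_i))/(1+r_i\abs{\lambda_{i,\vec{S}}(\vec{s}_i)})$. Since $r_i>0$ and $\abs{\lambda_{i,\vec{S}}(\vec{s}_i)}\ge 0$, the denominator is strictly positive, so cross-multiplying reduces the desired inequality $p_{i,\vec{S}}(\vec{s}'_i)\ge p_{i,\vec{S}}(\vec{s}_i)$ to
\begin{equation*}
    p_{i,\vec{S}}(\vec{s}_i)+r_i\inprod{\lambda_{i,\vec{S}}(\vec{s}_i)}{\vec{v}_{i,\vec{S}}}\ge p_{i,\vec{S}}(\vec{s}_i)\bigl(1+r_i\abs{\lambda_{i,\vec{S}}(\vec{s}_i)}\bigr),
\end{equation*}
which simplifies to $\inprod{\lambda_{i,\vec{S}}(\vec{s}_i)}{\vec{v}_{i,\vec{S}}}\ge p_{i,\vec{S}}(\vec{s}_i)\cdot\abs{\lambda_{i,\vec{S}}(\vec{s}_i)}$.

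Next I would expand both sides in components. Using $\abs{\lambda_{i,\vec{S}}(\vec{s}_i)}=\sum_{j=1}^{g_i}\varphi_{ij,\vec{S}}(\vec{s}_i)$ and $\inprod{\lambda_{i,\vec{S}}(\vec{s}_i)}{\vec{v}_{i,\vec{S}}}=\sum_{j=1}^{g_i}\varphi_{ij,\vec{S}}(\vec{s}_i)\,\bar{p}_i(\pi_{ij};\vec{S})$ together with Eq.~(\ref{eq:vertex_payoff}), the inequality collapses to
\begin{equation*}
    \sum_{j=1}^{g_i}\varphi_{ij,\vec{S}}(\vec{s}_i)\bigl(\bar{p}_i(\pi_{ij};\vec{S})-p_{i,\vec{S}}(\vec{s}_i)\bigr)\ge 0.
\end{equation*}
At this point the key observation is simply the definition of $\varphi_{ij,\vec{S}}$ in Eq.~(\ref{eq:varphi}): for each index $j$, either $\varphi_{ij,\vec{S}}(\vec{s}_i)=0$ (in which case the summand vanishes) or $\varphi_{ij,\vec{S}}(\vec{s}_i)=\bar{p}_i(\pi_{ij};\vec{S})-p_{i,\vec{S}}(\vec{s}_i)>0$ (in which case the summand equals $\varphi_{ij,\vec{S}}(\vec{s}_i)^2\ge 0$). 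Every term is non-negative, the sum is non-negative, and the theorem follows.

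There is no genuine obstacle here; the whole argument is an algebraic unpacking of the definitions. The reason it works is that $\psi_{i,\vec{S}}$ was constructed precisely to add weight to the pure strategies whose vertex payoff $\bar{p}_i(\pi_{ij};\vec{S})$ exceeds the current average $p_{i,\vec{S}}(\vec{s}_i)$, in amounts proportional to that excess, so the weighted average can only move upward. In passing, the same decomposition verifies that $\vec{s}'_i$ is indeed a member of $\mathbb{I}_i$ (non-negative components summing to one), confirming that $p_{i,\vec{S}}(\vec{s}'_i)$ is a legitimate mixed-strategy payoff and that $\psi_{i,\vec{S}}$ really does map $\mathbb{I}_i$ into itself.
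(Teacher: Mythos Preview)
Your proposal is correct and follows essentially the same route as the paper: both reduce the claim to the nonnegativity of $\sum_{j}\varphi_{ij,\vec{S}}(\vec{s}_i)\bigl(\bar{p}_i(\pi_{ij};\vec{S})-p_{i,\vec{S}}(\vec{s}_i)\bigr)$ and then observe, via Eq.~(\ref{eq:varphi}), that each summand is either $0$ or $\varphi_{ij,\vec{S}}(\vec{s}_i)^2$. The only cosmetic difference is that the paper keeps the difference $p_{i,\vec{S}}(\vec{s}'_i)-p_{i,\vec{S}}(\vec{s}_i)$ in the closed form $\norm{\lambda_{i,\vec{S}}(\vec{s}_i)}^2\big/\bigl(r_i^{-1}+\abs{\lambda_{i,\vec{S}}(\vec{s}_i)}\bigr)$ (Eq.~(\ref{eq:increase-payoff-2})), a formula it reuses later, whereas your cross-multiplication bypasses that identity.
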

\begin{proof}
    By Eq. (\ref{eq:payoff_inprod}) we know $p_i(\vec{s}'_i){=}\inprod{\psi_i(\vec{s}_i)}{\vec{v}_i}$ where $\vec{v}_i$ is the player $i$'s vertex payoff vector as in Eq. (\ref{eq:vertex_payoff}). Then by Eq. (\ref{eq:psi}) we have,
    \begin{align}
    \nonumber
    &p_i(\vec{s}'_i)=\Big\langle\frac{\vec{s}_i+r_i\lambda_i(\vec{s}_i)}{1+r_i\abs{\lambda_i(\vec{s}_i)}},\vec{v}_i\Big\rangle \\
    \nonumber
    &=\frac{1}{1+r_i\abs{\lambda_i(\vec{s}_i)}}\Big(\inprod{\vec{s}_i}{\vec{v}_i}+r_i\inprod{\lambda_i(\vec{s}_i)}{\vec{v}_i}\Big) \\
    &=\inprod{\vec{s}_i}{\vec{v}_i}+\frac{\inprod{\lambda_i(\vec{s}_i)}{\vec{v}_i}-\inprod{\vec{s}_i}{\vec{v}_i}\abs{\lambda_i(\vec{s}_i)}}{r_i^{-1}+\abs{\lambda_i(\vec{s}_i)}}.
    \end{align}
    Also by Eq. (\ref{eq:payoff_inprod}) we have $\inprod{\vec{s}_i}{\vec{v}_i}{=}p_i(\vec{s}_i)$. And
    \begin{align}
    \nonumber
    &\inprod{\lambda_i(\vec{s}_i)}{\vec{v}_i}-\inprod{\vec{s}_i}{\vec{v}_i}\abs{\lambda_i(\vec{s}_i)} \\
    \nonumber
    &=\sum_{j}^{g_i}\varphi_{ij}(\vec{s}_i)\bar{p}_i(\pi_{ij};\vec{S})-\sum_{j}^{g_i}p_i(\vec{s}_i)\varphi_{ij}(\vec{s}_i) \\
    \nonumber
    &=\sum_{j}^{g_i}\varphi_{ij}(\vec{s}_i)\Big(\bar{p}_i(\pi_{ij};\vec{S})-p_i(\vec{s}_i)\Big) \\
    &=\sum_{j}^{g_i}\varphi_{ij}(\vec{s}_i)\varphi_{ij}(\vec{s}_i)=\norm{\lambda_i(\vec{s}_i)}^2.\label{eq:increase-payoff-1}
    \end{align}
    Here $\norm{\cdot}$ is $L2$-norm of vector.
    Finally because $\norm{\lambda_i(\vec{s}_i)}^2{\ge}0$ and $r_i^{-1}{+}\abs{\lambda_i(\vec{s}_i)}{\ge}r_i^{-1}{>}0$, we have
    \begin{equation}
    p_i(\vec{s}'_i)=p_i(\vec{s}_i)+\frac{\norm{\lambda_i(\vec{s}_i)}^2}{r^{-1}_i+\abs{\lambda_i(\vec{s}_i)}}\ge{}p_i(\vec{s}_i).
    \label{eq:increase-payoff-2}
    \end{equation}
    And $p_i(\vec{s}'_i){=}p_i(\vec{s}_i)$ if and only if $\abs{\lambda_i(\vec{s}_i)}{=}0$.
\end{proof}

\begin{theorem}
    For any $\vec{S}{\in}\mathbb{S}$ and $\vec{s}_i{\in}\mathbb{I}_i$, $\abs{\lambda_{i,\vec{S}}(\vec{s}'_i)}{\le}\abs{\lambda_{i,\vec{S}}(\vec{s}_i)}$ where $\vec{s}'_i{=}\psi_{i,\vec{S}}(\vec{s}_i)$.\label{theo:vgs_decrease}
\end{theorem}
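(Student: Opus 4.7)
The plan is to reduce this statement directly to Theorem~\ref{theo:payoff_increase} by exploiting the structure of the regret sum $|\lambda_{i,\vec{S}}(\vec{s}_i)| = \sum_{j=1}^{g_i} \varphi_{ij,\vec{S}}(\vec{s}_i)$ together with the fact that, since only player $i$'s mixed strategy is unilaterally altered, the vertex payoffs $\bar{p}_i(\pi_{ij};\vec{S})$ are identical whether we evaluate the regret at $\vec{s}_i$ or at $\vec{s}'_i$. Only the average payoff $p_{i,\vec{S}}(\cdot)$ appearing inside each $\max\{0,\cdot\}$ changes.

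First I would write out both regret sums side by side using Eq.~(\ref{eq:varphi}), emphasizing that the $\bar{p}_i(\pi_{ij};\vec{S})$ terms cancel from the comparison. Then I would invoke Theorem~\ref{theo:payoff_increase} to assert $p_{i,\vec{S}}(\vec{s}'_i) \ge p_{i,\vec{S}}(\vec{s}_i)$, which gives the pointwise inequality
\begin{equation*}
\bar{p}_i(\pi_{ij};\vec{S}) - p_{i,\vec{S}}(\vec{s}'_i) \le \bar{p}_i(\pi_{ij};\vec{S}) - p_{i,\vec{S}}(\vec{s}_i)
\end{equation*}
for every $j$. Since $x \mapsto \max\{0,x\}$ is monotonically non-decreasing, this yields $\varphi_{ij,\vec{S}}(\vec{s}'_i) \le \varphi_{ij,\vec{S}}(\vec{s}_i)$ component-wise. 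Summing over $j = 1,\ldots,g_i$ then delivers the desired $|\lambda_{i,\vec{S}}(\vec{s}'_i)| \le |\lambda_{i,\vec{S}}(\vec{s}_i)|$.

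There is no real obstacle here; the entire argument is a monotonicity-plus-summation observation hanging off Theorem~\ref{theo:payoff_increase}. The only thing worth stating explicitly is the equality condition, which carries over verbatim: equality throughout the chain holds iff $p_{i,\vec{S}}(\vec{s}'_i) = p_{i,\vec{S}}(\vec{s}_i)$, which by the trailing remark of the previous proof (see Eq.~(\ref{eq:increase-payoff-2})) happens exactly when $|\lambda_{i,\vec{S}}(\vec{s}_i)| = 0$, i.e. when $\vec{s}_i$ is already optimal against $\vec{S}$. I would close by noting this consistency so that together the two theorems certify both strict improvement of payoff and strict reduction of regret sum at every non-equilibrium iterate, which is the property the next section will exploit for fixed-point iteration.
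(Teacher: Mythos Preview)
Your proposal is correct and follows exactly the paper's own argument: invoke Theorem~\ref{theo:payoff_increase} together with Eq.~(\ref{eq:varphi}) to obtain the component-wise inequality $\varphi_{ij,\vec{S}}(\vec{s}'_i)\le\varphi_{ij,\vec{S}}(\vec{s}_i)$, then sum over $j$. The paper's version is simply terser (a single sentence) and omits the explicit equality discussion you add at the end.
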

\begin{proof}
    For each component $\varphi_{ij}(\vec{s})$ in vector $\lambda_i(\vec{s})$ and its counterpart $\varphi_{ij}(\vec{s}'_i)$ in vector $\lambda_i(\vec{s}'_i)$, by Eq. (\ref{eq:varphi}) and Theorem~\ref{theo:payoff_increase} we have $\varphi_{ij}(\vec{s}){\ge}\varphi_{ij}(\vec{s}'_i)$ such that $\sum_j^{g_i}\varphi_{ij}(\vec{s}'_i){\le}\sum_j^{g_i}\varphi_{ij}(\vec{s}_i)$.
\end{proof}

These three aspects makes function $\psi_i$ a natural regret matching of mixed strategies:
if a player $i$ unilaterally and continuously adjusts its strategy by $\psi_i$, intuitively its strategy will keep ``pushing'' towards regret vector as target, the use of ``unprofitable'' pure strategies will keep dropping, and most importantly its payoff will monotonically increase and regret sum will monotonically decrease.
Since a player's payoff is bounded, continuous strategy adjustment by $\psi_i$ would preferably lead to the optimal mixed strategy with maximum payoff and zero regret sum.

The rest of this section considers all players adjusting their strategies simultaneously and continuously.

On the game with given $\vec{S}$, if all $n$ players simultaneously adjust strategies by their own $\psi_i$, there will form a new $\vec{S}'$. Let $\vec{S}'{=}\Psi(\vec{S})$ and we have

\begin{equation}
\Psi(\vec{S})=\Big(\psi_{1,\vec{S}}(\vec{s}_1),\ldots,\psi_{i,\vec{S}}(\vec{s}_i),\ldots,\psi_{n,\vec{S}}(\vec{s}_n)\Big).\label{eq:Psi}
\end{equation}

Note that $\Psi(\vec{S}){\in}\mathbb{S}$ for any $\vec{S}{\in}\mathbb{S}$ because $\psi_i(\vec{s}_i){\in}\mathbb{I}_i$ for any $\vec{s}_i{\in}\mathbb{I}_i$.
And all players' continuous strategy adjustment can be described by an iterated function $\Psi{\circ}\Psi{\circ}{\cdots}{\circ}\Psi(\vec{S})$, denoted by $\Psi^t(\vec{S})$ with $t{\in}\mathbb{N}$ being the times of iterations.
Given $\vec{S}{=}\vec{S}_0$, those iterated functions $\Psi^t(\vec{S})$ with $t{=}0,1,2,\ldots$ give rise to an infinite sequence of points in $\mathbb{S}$ for our study:

\begin{align}
\label{eq:sequence}
\nonumber
\vec{S}_0&=\Psi^0(\vec{S}_0),\\
\nonumber
\vec{S}_1&=\Psi^1(\vec{S}_0)=\Psi(\vec{S}_0),\\
\nonumber
\vec{S}_2&=\Psi^2(\vec{S}_0)=\Psi(\vec{S}_1),\\
\nonumber
&\cdots\\
\vec{S}_{t}&=\Psi^t(\vec{S}_0)=\Psi(\vec{S}_{t-1}),\\
\nonumber
&\cdots
\end{align}

Here $\Psi^0$ denotes the identity function.
This sequence is referred to as $(\vec{S}_{t})_{\mathbb{N}}$ implicitly indexed by $\vec{S}_0$ and $\Psi$.
In fact, the tuple $[\Psi,\vec{S}_0]$ uniquely determines a sequence $(\vec{S}_{t})_{\mathbb{N}}$; we say $[\Psi,\vec{S}_0]$ generates $(\vec{S}_{t})_{\mathbb{N}}$.
Sequence $(\vec{S}_{t})_{\mathbb{N}}$ describes an infinite dynamics of players adjusting their strategies in parallel.
Specifically, at the $t$\textsuperscript{th} iteration, each player $i$ adjusts its strategy by substituting $\vec{s}_i$ in $\vec{S}_{t-1}$ with $\psi_i(\vec{s}_i)$ so as to collectively form a new $n$-tuple $\vec{S}_{t}{=}\Psi(\vec{S}_{t-1})$;
at the $(t{+}1)$\textsuperscript{th} iteration $\vec{S}_{t+1}$ is formed by adjusting $\vec{S}_{t}$ in the same manner; and so forth.

Now we might hope that each and every player still has its payoff monotonically increase and its regret sum monotonically decrease in the process of simultaneous strategy adjustment.
In that case, the infinite sequence $(\vec{S}_{t})_{\mathbb{N}}$ must tend towards an \eqpt{}.
Moreover, it would be ideal if the overall regret sum of all players, i.e. $\sum^n_i\abs{\lambda_i(\vec{s}_i)}$, converged to zero along $(\vec{S}_{t})_{\mathbb{N}}$ such that $\lim_{\infty}\sum^n_i\abs{\lambda_i(\vec{s}_i)}^{(t)}{=}0$ and thus $\lim_{\infty}\abs{\lambda_i(\vec{s}_i)}^{(t)}{=}0$ for each player $i$.
In that case, by definition $\lim_{\infty}\vec{S}_{t}$ must be an \eqpt{}. Denote it by $\vec{S}^+$ and we must have $\Psi(\vec{S}^+){=}\vec{S}^+$.
Therefore, \eqpt{} $\vec{S}^+{=}\lim_{\infty}\vec{S}_{t}$ is also a fixed point~\cite{fixedpoint} of function $\Psi$;
the sequence $(\vec{S}_{t})_{\mathbb{N}}$ generated by $[\Psi,\vec{S}_0]$ is essentially an outcome of the process of fixed point iteration, which can be formulated to $\lim_{\infty}\Psi{\circ}\Psi{\circ}{\cdots}{\circ}\Psi(\vec{S}_0){=}\lim_{\infty}\Psi^t(\vec{S}_0){=}\vec{S}^+$.
For our purpose we assume that, with proper finite iterations, $\vec{S}^{*}{=}\vec{S}_t{=}\Psi^t(\vec{S}_0)$ can be a reasonable approximate of the true \eqpt{} $\vec{S}^+$ such that $\vec{S}^{*}{\approx}\vec{S}^+$, and the accuracy of such \eqpt{} approximation can be measured by the overall regret sum of all players.

Only this hope comes with huge caveat.
As it turns out in Section~\ref{section:accuracy}, the presumption on the monotonicity of payoff and regret sum for all players along $(\vec{S}_{t})_{\mathbb{N}}$ oftentimes doesn't hold true.
Instead, due to simultaneous strategy adjustment each player's regret sum might fluctuate along $(\vec{S}_{t})_{\mathbb{N}}$ with a tendency to diminish towards zero.
Nevertheless, we will see that this tendency can still serve the purpose of approximating an \eqpt{}.
Notice that in FIG~\ref{fig:angle_close_up}, the parameter $r_i$ in Eq. (\ref{eq:psi}) determines the rate of angle by which vector $\vec{s}_i$ and regret vector $\lambda_i(\vec{s}_i)$ ``close up''.
Here we call $r_i$ adjustment rate.
To counter the influence from simultaneous strategy adjustment, as we will see in next section ``infinitesimal'' adjustment rates are used for all players, in the hope that each player's strategy adjustment could be treated as negligible by its opponents and thus their simultaneous strategy adjustment could be treated as approximately unilateral.

% Therefore, player $i$ ``pushes'' its strategy towards its regret vector $\lambda_i(\vec{s}_i)$ for regret matching.

% Here is the design of this crucial function $\Psi$ of given $\vec{S}$:

% In particular, $\Psi(\vec{S}^+){=}\vec{S}^+$ if and only if $\vec{S}^+$ is an \eqpt{}.

% Accordingly, our approach to approximate an \eqpt{} takes root in the following presumption:
% % \begin{center}
%
%     \emph{If all players simultaneously adjust their strategies by Eq. (\ref{eq:psi}), each's payoff will increase and in the meantime each's regret sum will decrease.}
%
% % \end{center}
% It immediately follows that, in the sequence $(\vec{S}_{t})_{\mathbb{N}}$ we have $\bar{p}_i(\vec{S}_{t-1}){\le}\bar{p}_i(\vec{S}_{t})$ and $\abs{\lambda_i(\vec{s}_i)}^{(t-1)}{\ge}\abs{\lambda_i(\vec{s}_i)}^{(t)}$ for any player $i$ and any iteration $t{\ge}1$.
% In this case, all $n$ players' regret sum would simultaneously and monotonically decrease along $(\vec{S}_{t})_{\mathbb{N}}$ such that their strategies would be simultaneously getting ``close'' to their optimal ones.
% Collectively the sequence $(\vec{S}_{t})_{\mathbb{N}}$ must tend towards an \eqpt{}.
% It suffices to say that to improve approximation accuracy is to reduce the overall regret sum of all players.

We should visualize the sequence $(\vec{S}_{t})_{\mathbb{N}}$ of Eq. (\ref{eq:sequence}) as players' perpetual searching for the equilibrium point of non{-}cooperative game.
In this searching, players don't have to be aware of the equilibrium point per se.
Instead, each player is constantly chasing after its regret vector $\lambda_i(\vec{s}_i)$ in the hope of reducing its regret sum $\abs{\lambda_i(\vec{s}_i)}$ to zero.
Increasing payoff and meantime reducing regret sum is the players' sole incentive and purpose, and regret vector is the players' sole information and target for their strategies adjustment.
Regret vector is dynamic along iterations, of course.
At any iteration, given the current mixed strategies of its opponents, each player can calculate its regret vector by evaluating the payoff on its pure strategies without any across-iteration retrospective considerations.
In this searching, there is no place for the mediator beyond players; no inter-player information exchange other than the current use of mixed strategies.

\section{Game examples and visualizations\label{section:2p_game}}

In this section we will verify our approach of \eqpt{} approximation proposed in last section.
Examples of both two-person game and general $n$-person game will be tested, although the former ones are our main consideration due to their simplicity and sufficiency in revealing the important aspects of \eqpt{} approximation.

For the two-person game we have $n{=}2$ players, say the row player with $k$ pure strategies and the column player with $m$ pure strategies.
Accordingly, the $n$-tuple $\vec{S}$ is specialized to two-tuple $(\vec{s}^{1{\times}k}_r,\vec{s}^{1{\times}m}_c)$ of vectors, and thus the tuple $[\Psi,\vec{S}_0]$ is specialized to $[(\vec{A},\vec{B}),(r_r, r_c),(\vec{s}_r,\vec{s}_c)_0]$ since function $\Psi$ is determined by the two players' payoff matrices $(\vec{A}^{k{\times}m},\vec{B}^{k{\times}m})$ and their adjustment rates $(r_r,r_c)$.
Recall that given any $\vec{S}_0{\in}\mathbb{S}$ the tuple $[\Psi,\vec{S}_0]$ generates an infinite sequence $(\vec{S}_{t})_{\mathbb{N}}$ correspondingly.
Then we can implement the fixed point iteration in last section to Algorithm~\ref{alg:two-person}\footnote{Source code and demos can be found at \emph{https://github.com/lansiz/eqpt}.}, which takes $[(\vec{A},\vec{B}),(r_r, r_c),(\vec{s}_r,\vec{s}_c)_0]$ as input, and for our study outputs an approximate \eqpt{} $(\vec{s}_r,\vec{s}_c)^{*}$ as $\vec{S}^{*}$ and its regret sum $(R_r,R_c)^{*}$, a finite sequence $\big((\vec{s}_r,\vec{s}_c)_{t}\big)_{T}$ in correspondence to $(\vec{S}_{t})_{\mathbb{N}}$, and a finite sequence $\big((R_r,R_c)_{t}\big)_{T}$ of regret sum.
In Algorithm~\ref{alg:two-person}, the $\max$ operation compares vectors component-wise as opposed to Eq. (\ref{eq:varphi}).
Obviously, given $T$ the computational complexity of Algorithm~\ref{alg:two-person} is under $O(g^2)$ with $g{=}{\max}\{k,m\}$.

\begin{algorithm}[H]
\begin{algorithmic}[1]
\item[\textbf{Input:}]{payoff bimatrix $(\vec{A}^{k{\times}m},\vec{B}^{k{\times}m})$, adjustment rates $(r_r,r_c)$, initial two-tuple of mixed strategies $(\vec{s}^{1{\times}k}_r,\vec{s}^{1{\times}m}_c)$ and iterations $T$.}
\item[\textbf{Output:}]{
approximate \eqpt{} $(\vec{s}_r,\vec{s}_c)^{*}$ and its regret sum $(R_r,R_c)^{*}$; sequence $\big((\vec{s}_r,\vec{s}_c)_{t}\big)_{T}$; sequence $\big((R_r,R_c)_{t}\big)_{T}$.}
\State\textbf{for} $t{=}1$ \textbf{to} $T$ \textbf{do}
\State\hspace{0.5cm}vertices payoff: $\vec{v}_r=\vec{A}\vec{s}^{\intercal}_c$, $\vec{v}_c=\vec{B}^{\intercal}\vec{s}^{\intercal}_r$.
\State\hspace{0.5cm}payoff: $p_r=\vec{s}_r\vec{v}_r$, $p_c=\vec{s}_c\vec{v}_c$.
\State\hspace{0.5cm}regret vector: $\vec{R}_r{=}{\max}\{\vec{0},\vec{v}_r{-}p_r\}$, $\vec{R}_c{=}{\max}\{\vec{0},\vec{v}_c{-}p_c\}$.
\State\hspace{0.5cm}regret sum: $R_r{=}\abs{\vec{R}_r}$, $R_c{=}\abs{\vec{R}_c}$.
\State\hspace{0.5cm}\textbf{if} overall regret sum $R_r{+}R_c$ is a new minimum \textbf{then}
\State\hspace{0.5cm}\hspace{0.5cm}update $(\vec{s}_r,\vec{s}_c)^{*}$ with $(\vec{s}_r,\vec{s}_c)$.
\State\hspace{0.5cm}\hspace{0.5cm}update $(R_r,R_c)^{*}$ with $(R_r,R_c)$.
\State\hspace{0.5cm}\textbf{endif}
\State\hspace{0.5cm}append $(\vec{s}_r,\vec{s}_c)$ into its sequence.
\State\hspace{0.5cm}append $(R_r,R_c)$ into its sequence.
\State\hspace{0.5cm}adjust mixed strategies: $\vec{s}_r{=}\frac{\vec{s}_r+r_r\vec{R}_r}{1+r_rR_r}$, $\vec{s}_c{=}\frac{\vec{s}_c+r_c\vec{R}_c}{1+r_cR_c}$.
\State\textbf{end for}
\end{algorithmic}
\caption{\eqpt{} approximation of two-person game.\label{alg:two-person}}
\end{algorithm}

Given the output of Algorithm~\ref{alg:two-person}, we can plot the true \eqpt{} $(\vec{s}_r,\vec{s}_c)^{+}$, the approximate \eqpt{} $(\vec{s}_r,\vec{s}_c)^{*}$ and the sequence $\big((\vec{s}_r,\vec{s}_c)_{t}\big)_{T}$ on $\mathbb{R}^2$ or $\mathbb{R}^3$ to provide an intuitive visualization of $\big((\vec{s}_r,\vec{s}_c)_{t}\big)_{\mathbb{N}}$ approximating \eqpt{}.
Specifically, for the $3{\times}3$ two-person games with $k{=}3$ and $m{=}3$, we split $\big((\vec{s}_r,\vec{s}_c)_{t}\big)_{T}$ into two sequences of $(\vec{s}_{t})_{T}$, and then transform each into a sequence of $\big((x,y)_{t}\big)_{T}$ for plotting on $\mathbb{R}^2$ plane by defining

\begin{equation}
(x,y)_{t}=\vec{s}_{t}
\begin{bmatrix}
    0 & 0 \\
    \frac{\sqrt{2}}{2} & \frac{\sqrt{6}}{2} \\
    \sqrt{2} & 0
\end{bmatrix}
.
\label{eq:barycentric2cartesian}
\end{equation}

Note that Eq. (\ref{eq:barycentric2cartesian}) transforms the simplex in $\mathbb{R}^3$ into a subset of $\mathbb{R}^2$ enclosed by an equilateral triangle.
The following are our observations on the test results of four typical $3{\times}3$ games:
\begin{itemize}
\item{Game \game{3X3-1eq1sp} has one unique \eqpt{}, which uses one single pure strategy. As shown in FIG~\ref{fig:3X3-1eq1sp}, $\big((\vec{s}_r,\vec{s}_c)_{t}\big)_{T}$ converges straight towards the \eqpt{}. }
\item{Game \game{3X3-1eq2sp} has one unique \eqpt{}, which uses two pure strategies. As shown in FIG~\ref{fig:3X3-1eq2sp}, $\big((\vec{s}_r,\vec{s}_c)_{t}\big)_{T}$ converge towards the \eqpt{} with oscillation. }
\item{Game \game{3X3-2eq2sp} has two \eqpt{}s, each of which uses two pure strategies. As shown in FIG~\ref{fig:3X3-2eq2sp}, $\big((\vec{s}_r,\vec{s}_c)_{t}\big)_{T}$ converges with oscillation towards one of the two \eqpt{}s dependent on the initial strategies. }
\item{Game \game{3X3-1eq3sp} has one unique \eqpt{}, which uses three pure strategies. As shown in FIG~\ref{fig:3X3-1eq3sp}, $\big((\vec{s}_r,\vec{s}_c)_{t}\big)_{T}$ doesn't converge towards the \eqpt{} but develops into a seemingly perfect circle away from the true \eqpt{}. The last position of $\big((\vec{s}_r,\vec{s}_c)_{t}\big)_{T}$ could be any point in the circle. }

\end{itemize}

\begin{figure}[h]
\centering
\includegraphics[width=\linewidth]{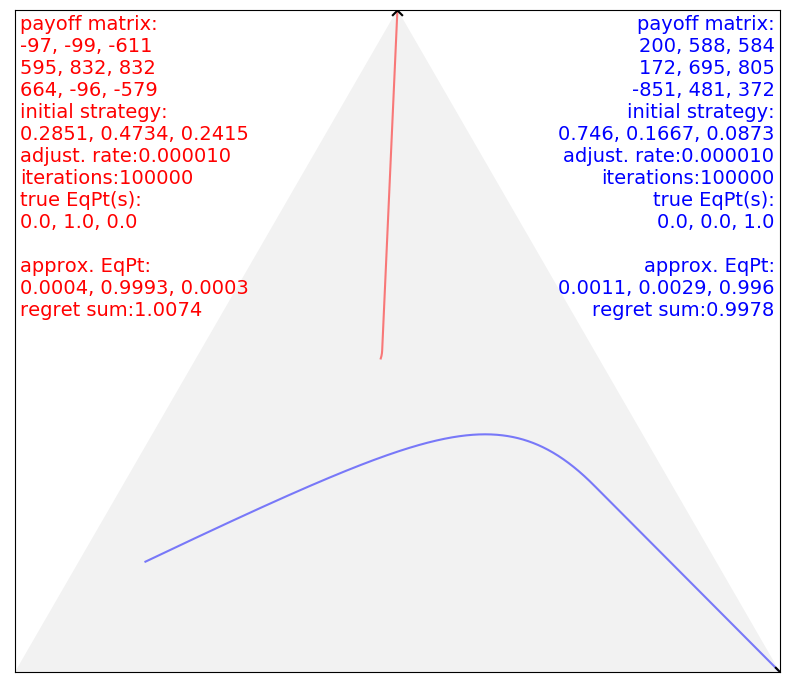}
\caption{\label{fig:3X3-1eq1sp}
    The ``strategy paths'' on the simplex of $\mathbb{R}^3$ for game \game{3X3-1eq1sp}.
    The two black crosses combine to represent the true \eqpt{}.
    The input and output of game are annotated.
}
\end{figure}

\begin{figure}[h]
\centering
\includegraphics[width=\linewidth]{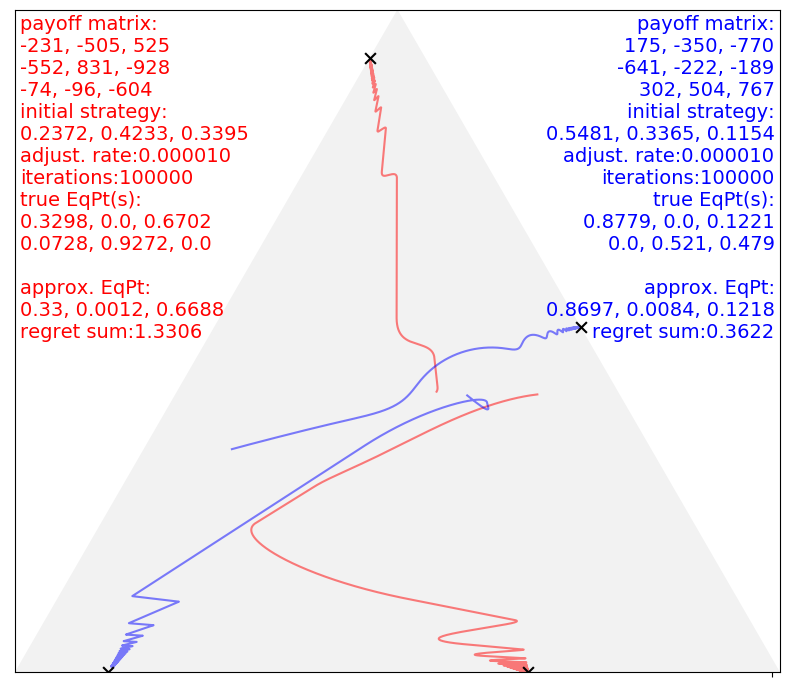}
\caption{\label{fig:3X3-2eq2sp}
    The ``strategy paths'' for game \game{3X3-2eq2sp}.
    The approximate \eqpt{} annotated corresponds to the true \eqpt{} at the base line of equilateral triangle.
}
\end{figure}

\begin{figure}[h]
\centering
\includegraphics[width=\linewidth]{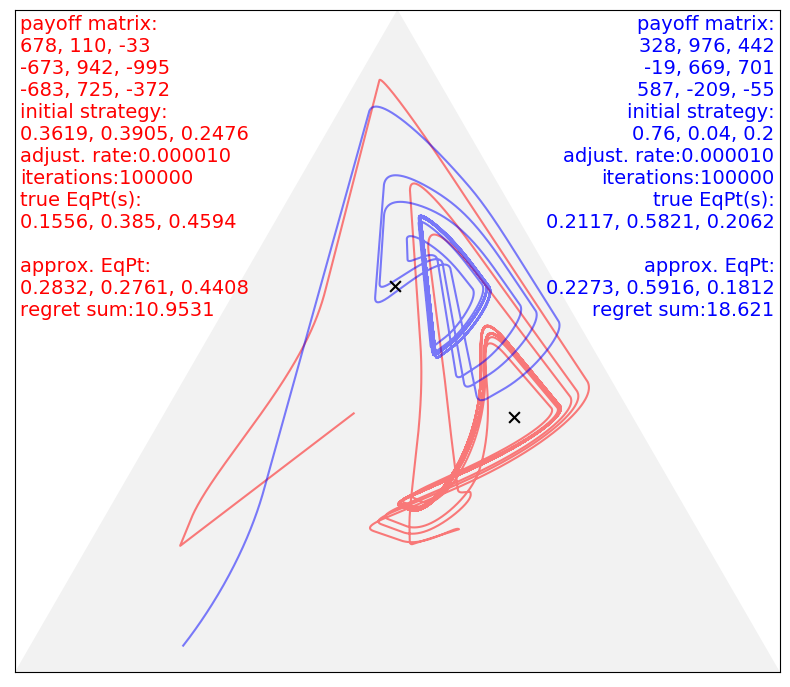}
\caption{\label{fig:3X3-1eq3sp}
     The ``strategy paths'' for game \game{3X3-1eq3sp}.
}
\end{figure}

\begin{figure}[h]
\centering
\includegraphics[width=\linewidth]{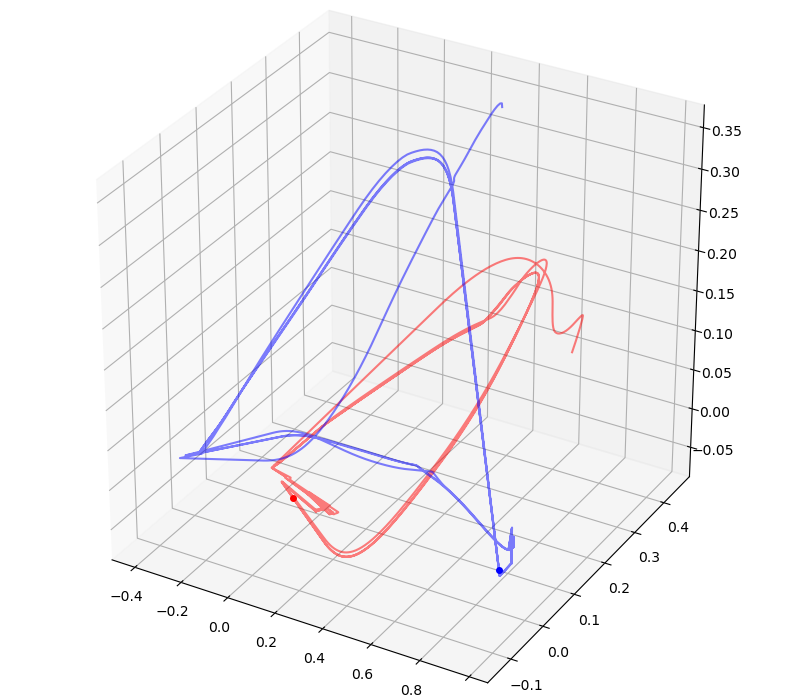}
\caption{\label{fig:60X40-disconv}
    The ``strategy paths'' for a $60{\times}40$ game.
    The blue and red dots represent the last positions of strategies.
}
\end{figure}

And for the general two-person games with $k{>}3$ and $m{>}3$, we split $\big((\vec{s}_r,\vec{s}_c)_{t}\big)_{T}$ into two sequences of $(\vec{s}_{t})_{T}$, and then with PCA (Principal Component Analysis for dimension reduction~\cite{pca}) transform each into a sequence of $\big((x,y,z)_{t}\big)_{T}$ for plotting in $\mathbb{R}^3$ space.
The following are our observations on the test results of two typical $60{\times}40$ games:

\begin{itemize}
\item{FIG~\ref{fig:60X40-conv} shows a $60{\times}40$ game whose $\big((\vec{s}_r,\vec{s}_c)_{t}\big)_{T}$ converges towards an \eqpt{}. }
\item{FIG~\ref{fig:60X40-disconv} shows a $60{\times}40$ game whose $\big((\vec{s}_r,\vec{s}_c)_{t}\big)_{T}$ develops into a perfect circle. }
\end{itemize}

For the two-person game, its \eqpt{}s are determined by the bimatrix $(\vec{A},\vec{B})$, independent of the initial $(\vec{s}_r,\vec{s}_c)_{0}$.
Then for the $3{\times}3$ game examples mentioned previously, we test them with many random $(\vec{s}_r,\vec{s}_c)_{0}$ and observe that there appear to be two kinds of \eqpt{}s, ``attractor'' \eqpt{} and ``repellor'' \eqpt{}, as follows:

\begin{figure}[h]
\centering
\includegraphics[width=\linewidth]{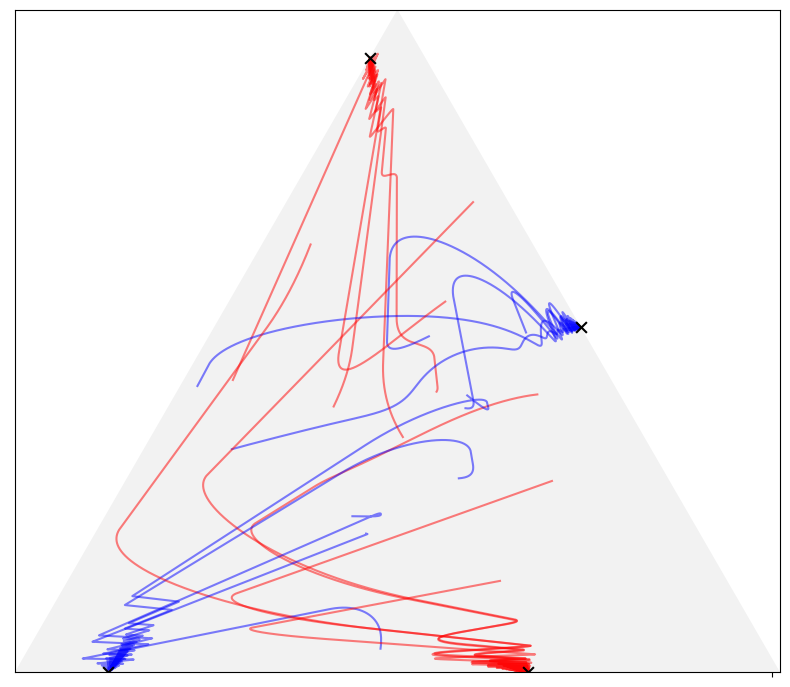}
\caption{\label{fig:3X3-2eq2sp-attractor}
    Both \eqpt{}s of game \game{3X3-2eq2sp} are ``attractors''.
    Ten pairs of random ``strateg paths'' are depicted.
}
\end{figure}

\begin{figure}[h]
\centering
\includegraphics[width=\linewidth]{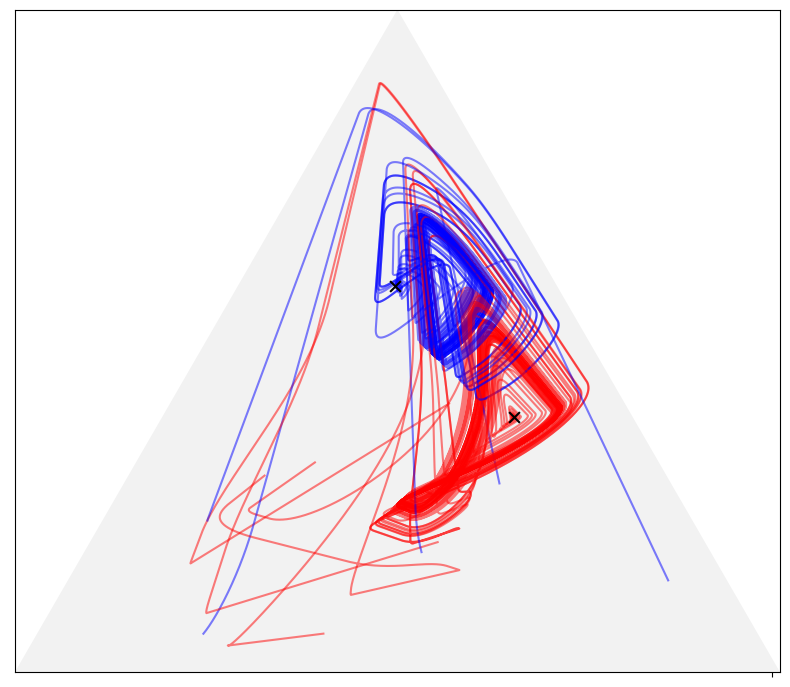}
\caption{\label{fig:3X3-1eq3sp-repellor}
    The \eqpt{} of game \game{3X3-1eq3sp} is a ``repellor''.
    Ten pairs of random ``strateg paths'' are depicted, five of which are started from closely near the true \eqpt{}.
}
\end{figure}

\begin{itemize}
\item{For game \game{3X3-1eq2sp} FIG~\ref{fig:3X3-1eq2sp-attractor} shows that, any $\big((\vec{s}_r,\vec{s}_c)_{t}\big)_{T}$ is ``attracted'' towards the unique \eqpt{}.}
\item{For game \game{3X3-2eq2sp} FIG~\ref{fig:3X3-2eq2sp-attractor} shows that, any $\big((\vec{s}_r,\vec{s}_c)_{t}\big)_{T}$ is ``attracted'' towards one of the two \eqpt{}s.}
\item{For game \game{3X3-1eq3sp} FIG~\ref{fig:3X3-1eq3sp-repellor} shows that, any $\big((\vec{s}_r,\vec{s}_c)_{t}\big)_{T}$ is ''repelled'' from the \eqpt{} unless $(\vec{s}_r,\vec{s}_c)_{0}$ is exactly the \eqpt{}.}
\end{itemize}

We also test the examples of $n$-person game by extending Algorithm~\ref{alg:two-person} from bivariate case to general multivariate case, which can be better explained by the source code.
And we observe the convergence or disconvergence of $(\vec{S}_{t})_{T}$ as with the two-person games.
FIG.~\ref{fig:nP-path} shows an example for the convergence case.
The computational complexity of this extended algorithm is under $O(n^2g^{n})$ with $n$ being the number of players and $g{=}{\max}\{g_1,g_2,\ldots,g_n\}$, meaning that it is polynomial time when $n$ is given.

\section{\eqpt{} approximation accuracy\label{section:accuracy}}

In last section, we observe that for some games sequence $\big((\vec{s}_r,\vec{s}_c)_{t}\big)_{T}$ seemingly converges towards \eqpt{}, e.g., \game{3X3-1eq1sp}, \game{3X3-1eq2sp} and \game{3X3-2eq2sp}.
In contrast, for games such as \game{3X3-1eq3sp}, sequence $\big((\vec{s}_r,\vec{s}_c)_{t}\big)_{T}$ clearly doesn't converge at all; instead, it evolves into a perpetual cyclic path away from \eqpt{}.
Roughly speaking, the convergence or disconvergence of $\big((\vec{s}_r,\vec{s}_c)_{t}\big)_{T}$ affects the accuracy of \eqpt{} approximation.
In this section, we will look into the approximation accuracy from two numerical perspectives.
One is to examine sequence $\big((R_r,R_c)_{t}\big)_{T}$ since regret sum is a measure of approximation accuracy as discussed in Section~\ref{section:fpi_eqpt}, and the other one is to revisit the convergence of $\big((\vec{s}_r,\vec{s}_c)_{t}\big)_{T}$ with metric.
And we will show the connection between them.

\begin{figure*}
\centering
\begin{subfigure}[b]{0.485\textwidth}
    \centering
    \includegraphics[width=\textwidth]{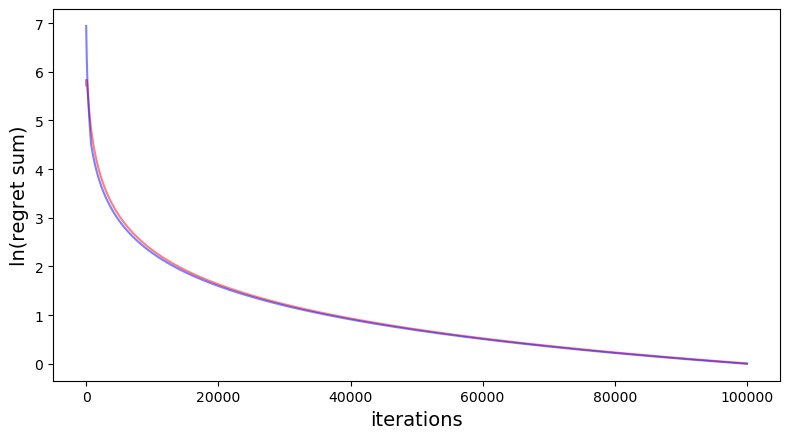}
    \caption[]{Game \game{3X3-1eq1sp}.\label{fig:3X3-1eq1sp-vgs}}
\end{subfigure}
\quad
\begin{subfigure}[b]{0.485\textwidth}
    \centering
    \includegraphics[width=\textwidth]{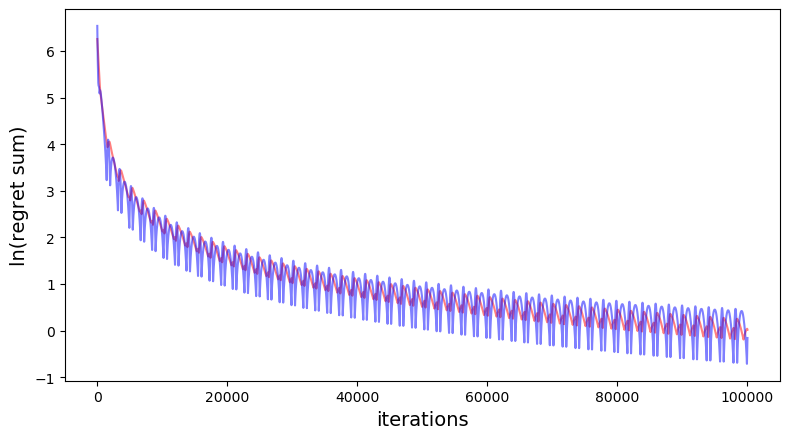}
    \caption[]{Game \game{3X3-1eq2sp}.\label{fig:3X3-1eq2sp-vgs}}
\end{subfigure}
\vskip\baselineskip
\begin{subfigure}[b]{0.485\textwidth}
    \centering
    \includegraphics[width=\textwidth]{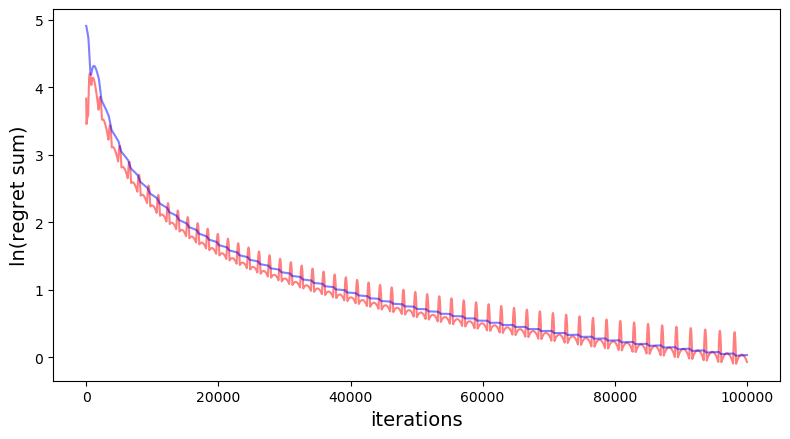}
    \caption[]{Game \game{3X3-2eq2sp}.\label{fig:3X3-2eq2sp-vgs}}
\end{subfigure}
\quad
\begin{subfigure}[b]{0.485\textwidth}
    \centering
    \includegraphics[width=\textwidth]{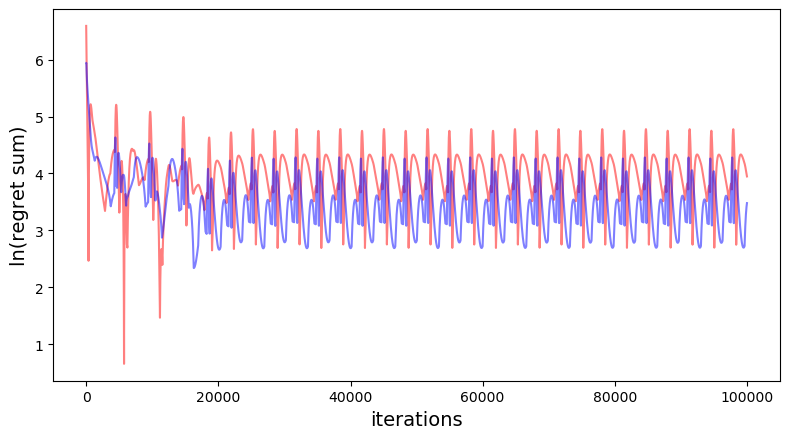}
    \caption[]{Game \game{3X3-1eq3sp}.\label{fig:3X3-1eq3sp-vgs}}
\end{subfigure}
\caption[]{
    The output sequences $\big((R_r,R_c)_{t}\big)_{T}$ of Algorithm~\ref{alg:two-person} for the four typical $3{\times}3$ games in last section.
}
\label{fig:examples-vgs}
\end{figure*}

First, by taking the $3{\times}3$ games in last section for example, we have the following observations on their sequences $\big((R_r,R_c)_{t}\big)_{T}$ plotted in FIG~\ref{fig:examples-vgs}:

\begin{itemize}
\item{
    As opposed to the presumption in Section~\ref{section:fpi_eqpt}, regret sum oftentimes doesn't monotonically decrease.
    Instead, due to the inter-player influence from simultaneous strategy adjustment, regret sum fluctuates with a decreasing tendency to reach new minimum.
    Game \game{3X3-1eq1sp} barely has regret sum fluctuation, whereas game \game{3X3-1eq3sp} has intensive regret sum fluctuation.
}
\item{
    There is different degree of periodicity in regret sum fluctuation for different games.
    Regret sum fluctuation of game \game{3X3-1eq3sp} exhibits strong periodicity and yet weak decreasing tendency.
}
\item{
    The strong periodicity of regret sum sequence in game \game{3X3-1eq3sp}, caused by the ``repellor'' \eqpt{}, severely undermines the accuracy of \eqpt{} approximation.
}
\item{
    It is painfully slow for regret sum sequence to converge towards zero in that regret sum decreases slowly at the latter iterations, even for \game{3X3-1eq1sp}.
    That is partly because, as implied by Eq. (\ref{eq:increase-payoff-2}), the ratio of payoff increment to regret sum is getting smaller as iteration goes.
}
\end{itemize}

Now it can be concluded that the inter-player influence has a significant impact on the \eqpt{} approximation accuracy which otherwise could, given time, go on to perfection.
In addition, FIG~\ref{fig:nperson-vgs} shows the decreasing tendency of regret sum for a five-person game.

\begin{figure}[h]
\centering
\includegraphics[width=\linewidth]{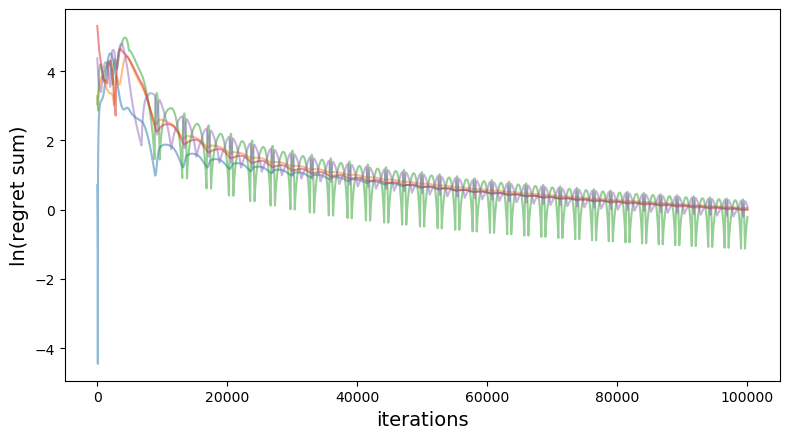}
\caption{\label{fig:nperson-vgs}
The regret sum sequences of a five-person game with players using two, three, four, five and six pure strategies.
}
\end{figure}

Next, for the $3{\times}3$ games in last section, we will reexamine their output sequences $\big((\vec{s}_r,\vec{s}_c)_{t}\big)_{T}$ by introducing a metric on $\mathbb{S}$.
Generally, metric is necessary for the definitions of limit, convergence and contractiveness of sequence $(\vec{S}_{t})_{\mathbb{N}}$.
Following the definitions in Section~\ref{section:fpi_eqpt}, let us have a set $\mathbb{S}$ and a function $\Psi:\mathbb{S}{\rightarrow}\mathbb{S}$ of Eq. (\ref{eq:Psi}).
From $\mathbb{S}$ we can derive a complete metric space $(\mathbb{S},d)$ with the metric function

\begin{equation}
d(\vec{X},\vec{Y})=\sum^n_i\norm{\vec{x}_{i}-\vec{y}_{i}}.\label{eq:metric-function-1}
\end{equation}

Here $\vec{x}_{i}$ and $\vec{y}_{i}$ are the $i$\textsuperscript{th} items of $\vec{X}{\in}\mathbb{S}$ and $\vec{Y}{\in}\mathbb{S}$ respectively.
Assume that $\Psi:\mathbb{S}{\rightarrow}\mathbb{S}$ is a contraction mapping on metric space $(\mathbb{S},d)$. That is, there exists a $q{\in}(0,1)$ such that for any $\vec{X},\vec{Y}{\in}\mathbb{S}$

\begin{equation}
d\big(\Psi(\vec{X}),\Psi(\vec{Y})\big)\le{}qd(\vec{X},\vec{Y}).\label{eq:contraction-property}
\end{equation}

Then by Banach fixed point theorem~\cite{fixedpoint}, the function $\Psi$ must admit one unique fixed point $\vec{S}^+{\in}\mathbb{S}$ such that $\Psi(\vec{S}^+){=}\vec{S}^+$ and $\lim_{\infty}\Psi^t(\vec{S}_0){=}\vec{S}^+$ for any $\vec{S}_0{\in}\mathbb{S}$.
Apparently the contraction condition of Eq. (\ref{eq:contraction-property}) is too strong for the convergence of $(\vec{S}_{t})_{\mathbb{N}}$, since it precludes the sequence's possible convergence towards more than one \eqpt{}, which is an important case of our interest as shown in last section.
Instead, let us consider the contraction property of a specific sequence $(\vec{S}_{t})_{\mathbb{N}}$ generated by the given tuple $[\Psi,\vec{S}_0]$.
From $(\vec{S}_{t})_{\mathbb{N}}$ we can derive an infinite real sequence $(\dot{d}_{t})_{\mathbb{N}_+}$ by defining

\begin{equation}
\dot{d}_{t}=d(\vec{S}_{t-1},\vec{S}_{t})=d\big(\vec{S}_{t-1},\Psi(\vec{S}_{t-1})\big).
\end{equation}

And from sequence $(\dot{d}_{t})_{\mathbb{N}_+}$ we can further derive an infinite real sequence $(\dot{q}_{t})_{\mathbb{N}_+}$ by defining

\begin{equation}
\dot{q}_{t}=\frac{\dot{d}_{t+1}}{\dot{d}_{t}}=\frac{d(\vec{S}_{t},\vec{S}_{t+1})}{d(\vec{S}_{t-1},\vec{S}_{t})}=\frac{d\big(\Psi(\vec{S}_{t-1}),\Psi(\vec{S}_{t})\big)}{d\big(\vec{S}_{t-1},\vec{S}_{t}\big)}.
\end{equation}

\begin{figure*}
\centering
\begin{subfigure}[b]{0.485\textwidth}
    \centering
    \includegraphics[width=\textwidth]{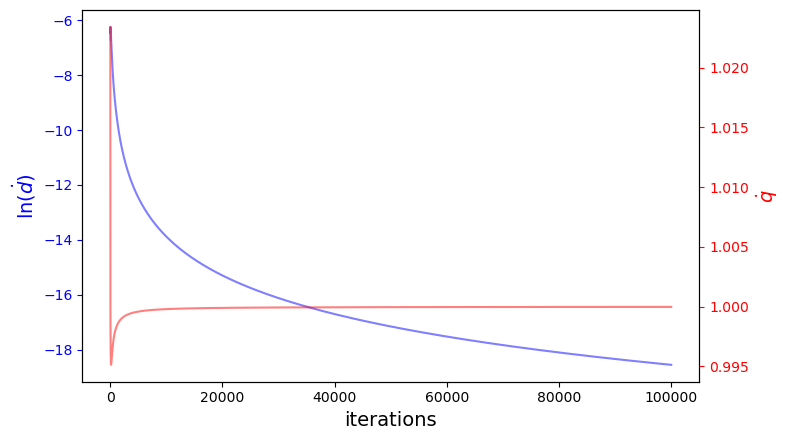}
    \caption[]{Game \game{3X3-1eq1sp}.\label{fig:3X3-1eq1sp-metric}}
\end{subfigure}
\quad
\begin{subfigure}[b]{0.485\textwidth}
    \centering
    \includegraphics[width=\textwidth]{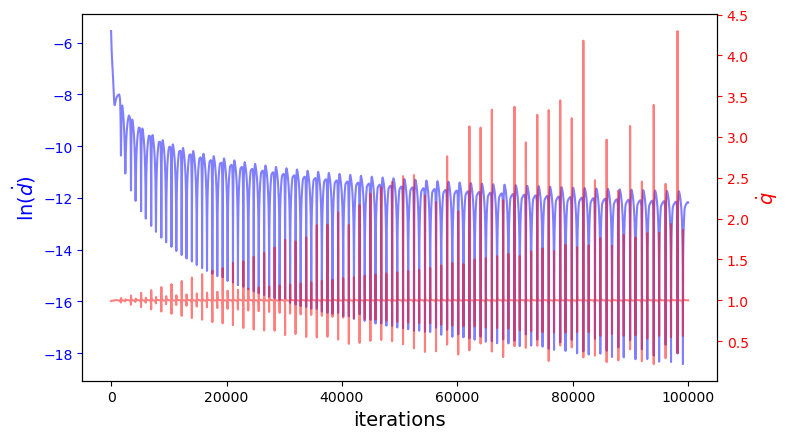}
    \caption[]{Game \game{3X3-1eq2sp}.\label{fig:3X3-1eq2sp-metric}}
\end{subfigure}
\vskip\baselineskip
\begin{subfigure}[b]{0.485\textwidth}
    \centering
    \includegraphics[width=\textwidth]{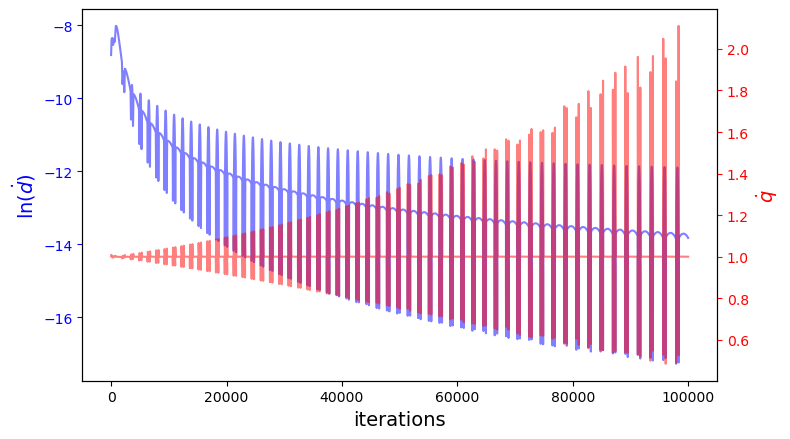}
    \caption[]{Game \game{3X3-2eq2sp}.\label{fig:3X3-2eq2sp-metric}}
\end{subfigure}
\quad
\begin{subfigure}[b]{0.485\textwidth}
    \centering
    \includegraphics[width=\textwidth]{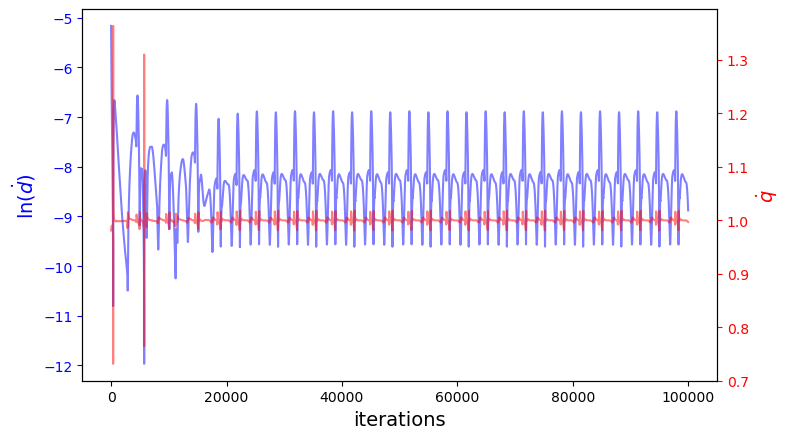}
    \caption[]{Game \game{3X3-1eq3sp}.\label{fig:3X3-1eq3sp-metric}}
\end{subfigure}
\caption[]{
    The derived sequences $(\dot{d}_{t})_{T-1}$ in blue and $(\dot{q}_{t})_{T-2}$ in red for the four typical $3{\times}3$ games in last section.}
\label{fig:examples-metric}
\end{figure*}

If $(\vec{S}_{t})_{\mathbb{N}}$ is a contractive sequence ideally, i.e. $0{<}\dot{q}_{t}{<}1$ for any $t{\in}\mathbb{N}_+$, it must be an Cauchy sequence and converge to an \eqpt{}.
That is, $(\vec{S}_{t})_{\mathbb{N}}$ is a process of fixed point iteration.
Meanwhile, the sequence $(\dot{d}_{t})_{\mathbb{N}_+}$ must monotonically decrease and converge to zero.
From the $\big((\vec{s}_r,\vec{s}_c)_{t}\big)_{T}$ sequences generated by the $3{\times}3$ games of last section, we derive and depict their $(\dot{d}_{t})_{T-1}$ and $(\dot{q}_{t})_{T-2}$ in FIG~\ref{fig:examples-metric} to show that $(\dot{d}_{t})_{T-1}$ sequences don't monotonically decrease or converge to zero, meaning that neither $\big((\vec{s}_r,\vec{s}_c)_{t}\big)_{T}$ sequences nor their $\Psi$ functions are contractive on $(\mathbb{S},d)$.
Nevertheless, by comparing FIG~\ref{fig:examples-vgs} and FIG~\ref{fig:examples-metric}, we observe that there is strong coherence among the sequences $\big((R_r,R_c)_{t}\big)_{T}$, $(\dot{d}_{t})_{T-1}$ and $(\dot{q}_{t})_{T-2}$ of the same game, as follows:

\begin{itemize}
\item{
    As with regret sum, $\dot{d}$ oftentimes fluctuates with a decreasing tendency to reach new minimum.
    $\dot{d}$ of game \game{3X3-1eq1sp} exhibits almost no fluctuation, whereas $\dot{d}$ of game \game{3X3-1eq3sp} exhibits strong periodicity.
}
\item{
    As with regret sum, in game \game{3X3-1eq3sp} the strong periodicity of $(\dot{d}_{t})_{T-1}$ could have hurt the accuracy of \eqpt{} approximation.
}
\item{
    $\dot{q}$ fluctuates around $1$ at the latter iterations, which explains why $(\dot{d}_{t})_{T-1}$ and regret sum sequence converge at such a low speed.
}
\end{itemize}

Therefore, $\dot{d}$ can be seen as an another overall measure of \eqpt{} approximation accuracy in addition to overall regret sum and to improve accuracy is to reduce $\dot{d}$, which in return justifies our definition of metric function in Eq. (\ref{eq:metric-function-1}).
Trivially, there is an alternative metric function and applying it gives us observations similar to those with Eq. (\ref{eq:metric-function-1}).
Here is the alternative~\cite{topology}:

\begin{equation}
d(\vec{X},\vec{Y})=\max_i^n\norm{\vec{x}_{i}-\vec{y}_{i}}.\label{eq:metric-function-2}
\end{equation}

Now we can see the shortcomings of our theory: it relies too much on the naked-eye observation.
Given $\vec{S}_0$ and $\Psi$, only by observing the generated sequence $(\vec{S}_{t})_{\mathbb{N}}$ and its derived sequences can we learn about their convergence or disconvergence, their periodicity, \eqpt{}s being ``attractors'' or ``repellors'', \eqpt{} approximation accuracy, etc.
And yet we fail to, in the case of two-person game, provide a function if any {--} say $\chi(\vec{A},\vec{B},r_r,r_c,\vec{S}_0)$ {--} to conveniently determine for $[\Psi,\vec{S}_0]$.
Nor can we provide a function $\chi(\vec{A},\vec{B},r_r,r_c)$ to determine whether every $\vec{S}_0{\in}\mathbb{S}$ leads to a convergent sequence.

As a practical use of our theory, given a game and players' initial strategy $\vec{S}_0$ we need to find out the optimal parameters for function $\Psi$ in order to optimize \eqpt{} approximation accuracy.
That is, in the case of two-person game, given payoff bimatrix $(\vec{A},\vec{B})$ and initial $(\vec{s}_r,\vec{s}_c)_0$, we need to find out the optimal input $[(\vec{A},\vec{B}),(r_r, r_c),(\vec{s}_r,\vec{s}_c)_0]$ for Algorithm~\ref{alg:two-person} to minimize overall regret sum.
For that purpose, obviously we can try different adjustment rates $(r_r, r_c)$ in input as in FIG~\ref{fig:find-best-Psi-with-rate}.
And notice that bimatrix $(a_r\vec{A}{+}b_r,a_c\vec{B}{+}b_c)$, where $(a_r,a_c){\in}\mathbb{R}_{+}^2$ and $(b_r,b_c){\in}\mathbb{R}^2$, has the same \eqpt{}s as bimatrix $(\vec{A},\vec{B})$.
Thus we can try different scale factors $(a_r,a_c)$ in input as shown in FIG~\ref{fig:find-best-Psi-with-scale}.
With parameters $(r_r,r_c)$, $(a_r,a_c)$ and $(b_r,b_c)$ combined, there forms a search space for function $\Psi$, which could be overwhelming when it comes to many-person games.

\begin{figure}[h]
\centering
\includegraphics[width=\linewidth]{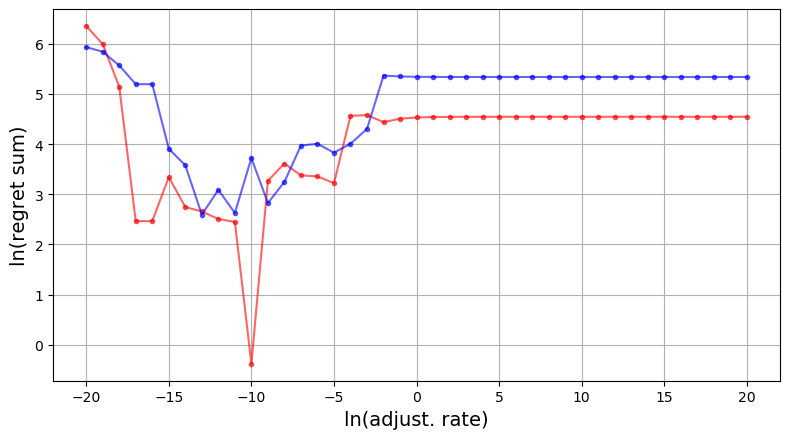}
\caption{\label{fig:find-best-Psi-with-rate}
    The approximate \eqpt{}'s regret sum $(R_r,R_c)^{*}$ of Algorithm~\ref{alg:two-person} for different adjustment rates in the case of game \game{3X3-1eq3sp}.
    For simplicity we consider $r_r{=}r_c$ for the input $(r_r,r_c)$.
}
\end{figure}

\begin{figure}[h]
\centering
\includegraphics[width=\linewidth]{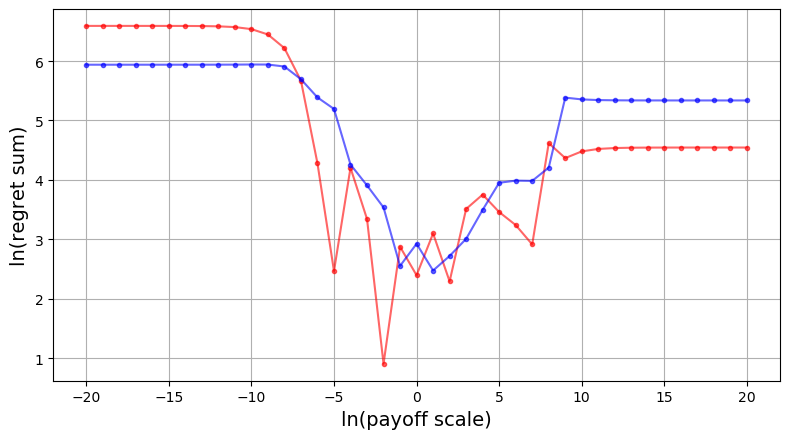}
\caption{\label{fig:find-best-Psi-with-scale}
    The approximate \eqpt{}'s regret sum $(R_r,R_c)^{*}$ of Algorithm~\ref{alg:two-person} for different bimatrix scale factors in the case of game \game{3X3-1eq3sp}.
    For simplicity we consider $a_r{=}a_c$ and $(b_r,b_c){=}(0,0)$ for the input $(a_r\vec{A}{+}b_r,a_c\vec{B}{+}b_c)$.
    Note that the output $(R_r,R_c)^{*}$ is ``scaled back'' with $(a_r^{-1},a_c^{-1})$ for proper comparison.
}
\end{figure}

\section{Variants of function $\lambda_{i,\vec{S}}$\label{section:lambda_variants}}

In this section, we will introduce two variants of $\lambda_{i,\vec{S}}$ function for the purpose of generalization.

By applying on each component of the vector $\lambda_{i,\vec{S}}(\vec{s}_i)$ a general function $\alpha_{ij}:\mathbb{R}_{{\geq}0}{\rightarrow}\mathbb{R}_{{\geq}0}$, we have the first variant:

\begin{align}
\nonumber
&\bar{\lambda}_{i,\vec{S}}(\vec{s}_i)= \\
&\Big(\alpha_{i1}{\circ}\varphi_{i1,\vec{S}}(\vec{s}_i),\ldots,\alpha_{ij}{\circ}\varphi_{ij,\vec{S}}(\vec{s}_i),\ldots,\alpha_{ig_i}{\circ}\varphi_{ig_i,\vec{S}}(\vec{s}_i)\Big).
\label{eq:lambda-with-alpha}
\end{align}

Accordingly, we can write Eq. (\ref{eq:increase-payoff-1}) to
\begin{align}
\nonumber
&\inprod{\bar{\lambda}_i(\vec{s}_i)}{\vec{v}_i}-\inprod{\vec{s}}{\vec{v}_i}\abs{\bar{\lambda}_i(\vec{s}_i)} \\
&=\sum_{j}^{g_i}\alpha_{ij}{\circ}\varphi_{ij}(\vec{s}_i)\Big(\bar{p}_i(\pi_{ij};\vec{S})-p_i(\vec{s}_i)\Big).
\label{eq:increase-payoff-21}
\end{align}

Suppose that $\alpha_{ij}(0){=}0$ for any $i$ and $j$, then as with Eq. (\ref{eq:increase-payoff-1}), in Eq. (\ref{eq:increase-payoff-21}) the summation terms with $\bar{p}_i(\pi_{ij};\vec{S}){-}p_i(\vec{s}_i){\le}0$ must be zeros or otherwise $\alpha_{ij}{\circ}\varphi_{ij}(\vec{s}_i)\varphi_{ij}(\vec{s}_i)$, such that
\begin{align}
\nonumber
&\sum_{j}^{g_i}\alpha_{ij}{\circ}\varphi_{ij}(\vec{s}_i)\Big(\bar{p}_i(\pi_{ij};\vec{S})-p_i(\vec{s}_i)\Big) \\
&=\sum_{j}^{g_i}\alpha_{ij}{\circ}\varphi_{ij}(\vec{s}_i)\varphi_{ij}(\vec{s}_i)=\inprod{\bar{\lambda}_i(\vec{s}_i)}{\lambda_i(\vec{s}_i)}.
\label{eq:increase-payoff-22}
\end{align}

With Eq. (\ref{eq:increase-payoff-21}) and Eq. (\ref{eq:increase-payoff-22}) we can write Eq. (\ref{eq:increase-payoff-2}) to

\begin{equation}
p_{i}(\vec{s}'_i)=p_{i}(\vec{s}_i)+\frac{\inprod{\bar{\lambda}_i(\vec{s}_i)}{\lambda_i(\vec{s}_i)}}{r^{-1}_i+\abs{\bar{\lambda}_i(\vec{s}_i)}}\ge{}p_{i}(\vec{s}_i).\label{eq:increase-payoff-3}
\end{equation}

It is important to note that, $\alpha_{ij}(0){=}0$ also ensures, as discussed in Section~\ref{section:fpi_eqpt}, ``less profitable than average'' pure strategies being less used.
Now, the new $\bar{\lambda}_{i,\vec{S}}$ function not only increases payoff and thus decreases regret sum for each player's unilateral strategy adjustment, but also exposes a set of $\alpha_{ij}$ functions to provide more control on the process of fixed point iteration.
For example, as Eq. (\ref{eq:lambda-with-alpha}) implies, the set of $\alpha_{ij}$ functions directly controls the target of player $i$'s strategy adjustment by manipulating the components of regret vector.
In fact, Eq. (\ref{eq:lambda-with-alpha}) can be seen as a generalization of the decision-making on strategy target, while Eq. (\ref{eq:lambda}) is a special case with each $\alpha_{ij}$ being identity function.
In addition to functions $\alpha_{ij}$, we notice that Eq. (\ref{eq:increase-payoff-3}) still holds true if the constant $r_i$ is replaced with a general function $r_i:\mathbb{S}{\rightarrow}\mathbb{R}_{+}$, which allows more control on the adjustment rate for the process of fixed point iteration.
We believe that functions $\alpha_{ij}$ and $r_i$ could broaden the scope for the interpretation of our theory in the real-life cases.
With them, Eq. (\ref{eq:psi}) can be generalized to

\begin{equation}
\psi_{i,\vec{S}}(\vec{s}_i)=\frac{\vec{s}_i+r_i(\vec{S})\bar{\lambda}_{i,\vec{S}}(\vec{s}_i)}{1+r_i(\vec{S})\abs{\bar{\lambda}_{i,\vec{S}}(\vec{s}_i)}}.\label{eq:generalized_psi}
\end{equation}

Next we introduce the second variant of $\lambda_{i,\vec{S}}$ to propose another form of fixed point iteration beyond the non-cooperative game.
In no relevance to the concepts of game and regret, here we consider a system of $n$ agents whose states, alike strategies, are defined to be points in simplex.
Most distinctively, player $i$'s $\lambda_{i,\vec{S}}:\mathbb{I}_i{\rightarrow}\mathbb{R}_{{\geq}0}^{g_i}$ of Eq. (\ref{eq:lambda}) is replaced with a general continuous $\lambda_{i,\vec{S}}:\mathbb{I}_i{\rightarrow}\mathbb{I}_i$ which maps agent $i$'s current state into a target state.
For every agent $i$, because simplex $\mathbb{I}_i$ is a compact and convex set, according to fixed-point theorem~\cite{fixedpoint} given any $n$-tuple $\vec{S}$ there must exist an $\vec{s}_i^+{\in}\mathbb{I}_i$ such that $\vec{s}_i^+{=}\lambda_{i,\vec{S}}(\vec{s}_i^+)$.
Meanwhile, $\psi_{i,\vec{S}}(\vec{s}_i)$, now representing agent $i$'s state updating over $\vec{S}$, has the same geometrical interpretation as in FIG~\ref{fig:angle_close_up} except for vector $\lambda_{i,\vec{S}}(\vec{s}_i)$ always being on simplex $\mathbb{I}_i$.
Then with $\abs{\lambda_{i,\vec{S}}(\vec{s}_i)}{=}1$ we can write Eq. (\ref{eq:psi}) to

\begin{equation}
\psi_{i,\vec{S}}(\vec{s}_i)=\frac{1}{1+r_i}\vec{s}_i+\frac{r_i}{1+r_i}\lambda_{i,\vec{S}}(\vec{s}_i).
\label{eq:psi-variant}
\end{equation}

Here $r_i{>}0$.
Therefore, vector $\vec{s}'_i{=}\psi_i(\vec{s}_i)$ is a convex combination of vectors $\vec{s}_i$ and $\lambda_i(\vec{s}_i)$;
$\Psi(\vec{S}^+){=}\vec{S}^+$ if and only if each item $\vec{s}_i^+$ in $\vec{S}^+$ satisfies $\lambda_{i,\vec{S}^+}(\vec{s}^+_i){=}\vec{s}^+_i$.
And we have

\begin{align}
\nonumber
&\norm{\vec{s}'_i-\lambda_i(\vec{s}_i)}=\Big\lVert\frac{1}{1+r_i}\vec{s}_i-\frac{1}{1+r_i}\lambda_i(\vec{s}_i)\Big\rVert\\
&=\frac{1}{1+r_i}\norm{\vec{s}_i-\lambda_i(\vec{s}_i)}<\norm{\vec{s}_i-\lambda_i(\vec{s}_i)}.\label{eq:reduce-L2-distance}
\end{align}

This inequality states that, each agent $i$'s state will get closer, in terms of $L2$-norm distance, to its target state $\lambda_i(\vec{s}_i)$ if it unilaterally updates its state by Eq. (\ref{eq:psi-variant}).
Given proper function $\lambda_i$ and update rate $r_i$, $\norm{\vec{s}_i-\lambda_i(\vec{s}_i)}$ could be decreasing, even monotonically, along  the sequence $(\vec{S}_{t})_{\mathbb{N}}$ of Eq. (\ref{eq:sequence}).
Now as with the treatment in Section~\ref{section:fpi_eqpt} let us ideally assume that, $\lim_{\infty}\norm{\vec{s}_i{-}\lambda_i(\vec{s}_i)}^{(t)}{=}0$ or equivalently $\lim_{\infty}\big(\vec{s}_i{-}\lambda_i(\vec{s}_i)\big)^{(t)}{=}\vec{0}$ for all agents even if they simultaneously update their states by Eq. (\ref{eq:psi-variant}).
In that case, $(\vec{S}_{t})_{\mathbb{N}}$ is a process of fixed point iteration such that $\Psi(\vec{S}^+){=}\vec{S}^+$ with $\vec{S}^+{=}\lim_{\infty}\vec{S}_{t}$.
In our study on the firing equilibrium of neural network~\cite{Lan2014,Lan2018}, for this form of fixed point iteration we proposed two special cases: one considered $n{=}1$ agent, and the other one considered $n{>}1$ agents each having $\vec{s}_i{\in}\mathbb{I}_i{\subset}\mathbb{R}^2$.

We can compare this new fixed point iteration (FPI\textsubscript{2} to refer to) with the one in Section~\ref{section:fpi_eqpt} (FPI\textsubscript{1}).
Either FPI\textsubscript{1} or FPI\textsubscript{2} seeks out a fixed point $\vec{S}^+$ of function $\Psi$ such that $\Psi(\vec{S}^+){=}\vec{S}^+$.
In FPI\textsubscript{1} such $\vec{S}^+$ is an equilibrium point of non-cooperative game because $\abs{\lambda_{i,\vec{S}^+}(\vec{s}^+_i)}{=}0$ for all players, while in FPI\textsubscript{2} such $\vec{S}^+$ is a ``stationary'' state of system because $\lambda_{i,\vec{S}^+}(\vec{s}^+_i){=}\vec{s}^+_i$ for all agents.
In FPI\textsubscript{1} each player adjusts its strategy towards vector $\lambda_i(\vec{s}_i)$ to diminish it, while in FPI\textsubscript{2} each agent updates its strategy towards $\lambda_i(\vec{s}_i)$ to become it.
We believe that, as with FPI\textsubscript{1}, the metric functions of Eq. (\ref{eq:metric-function-1}) and Eq. (\ref{eq:metric-function-2}) could be applied to FPI\textsubscript{2} for useful results.

\section{Conclusions\label{section:conclusions}}
In this paper, to approximate a Nash equilibrium of non-cooperative game, we propose and test a regret matching with geometrical flavor.
The results show that, if each and every player continuously and ``smoothly'' suppresses the use of immediately ``unprofitable'' pure strategies, the game will evolve towards an equilibrium point.
Our approach has its merit in its simplicity and naturalness, suggesting that in the really world the tendency towards Nash equilibrium could be more pervasive than expected.
In natural selection the ``unprofitable'' pure strategies are immediately suppressed because the players using them have less offspring to pervade them.
In markets the players tend to use less in their mixed ``portfolios'' the pure strategies they deem immediately unprofitable.
And we can imagine that in many cases the tendency could be too strong to resist, so much so that its prevention could perhaps be a more suitable topic, since most likely the equilibrium point is not optimal against the non-equilibrium ones in terms of overall payoff.

There are some other points worth mention.
Our approximation of equilibrium point, which all boils down to players' reduction of regret sum, actually makes no assumption on the existence of equilibrium point; the players need no notion of equilibrium point.
The players adjust their strategies nearly independently with minimal information exchange among them, which means that the approximation of equilibrium point is effectively a distributed computation carried out by the nature.
The periodicity we observe in those sequences of Section~\ref{section:2p_game} and Section~\ref{section:accuracy} is an unexpected and yet truly interesting pattern worth further investigations.

%\section{Acknowledgement}
%We thank anonymous reviewers for comments that improved the manuscript.

\bibliographystyle{unsrt}
\bibliography{refs}

\end{document}